\documentclass[a4paper,UKenglish,cleveref, autoref, thm-restate]{lipics-v2021}

\title{Solving Stackelberg Vertex Cover on trees using split and join} 

\titlerunning{Stackelberg Vertex Cover on trees} 

\author{Dominik Scheder}{TU Chemnitz, Germany}{dominik.scheder@informatik.tu-chemnitz.de}{https://orcid.org/0000-0002-9360-7957}{}

\author{Johannes Tantow}{TU Chemnitz, Germany}{johannes.tantow@informatik.tu-chemnitz.de}{https://orcid.org/0009-0006-0408-6966}{}

\authorrunning{D. Scheder and J. Tantow} 

\Copyright{Dominik Scheder and Johannes Tantow} 

\ccsdesc[100]{Theory of computation~Problems, reductions and completeness}
\ccsdesc[100]{Theory of computation~Mathematical optimization}
\ccsdesc[100]{Theory of computation~Dynamic programming}

\keywords{Vertex Cover, Stackelberg Games, Bilevel optimization, Dynamic Programming} 

\category{} 

\relatedversion{} 

\acknowledgements{We want to thank Lennart Kauther for helpful discussions}

\nolinenumbers 

\hideLIPIcs  

\EventEditors{John Q. Open and Joan R. Access}
\EventNoEds{2}
\EventLongTitle{42nd Conference on Very Important Topics (CVIT 2016)}
\EventShortTitle{CVIT 2016}
\EventAcronym{CVIT}
\EventYear{2016}
\EventDate{December 24--27, 2016}
\EventLocation{Little Whinging, United Kingdom}
\EventLogo{}
\SeriesVolume{42}
\ArticleNo{23}

\usepackage{graphicx} 
\usepackage{amsmath}
\usepackage{amssymb}
\usepackage{amsthm}
\usepackage{cleveref}
\usepackage[caps, full]{complexity}
\usepackage{subcaption}
\usepackage{tikz}
\usetikzlibrary{trees,positioning}
\usepackage{tcolorbox}
\usepackage{thmtools}
\usepackage{float}

\newtheorem{problem}[theorem]{Problem}

\newcommand{\opt}{\textnormal{opt}}

\renewcommand{\R}{\mathbb{R}}
\newcommand{\feasible}{\textnormal{feasible}}
\newcommand{\cost}{\textnormal{cost}}
\newcommand{\im}{\textnormal{Im}}
\newcommand{\N}{\mathbb{N}}
\newcommand{\rev}{\textnormal{Rev}}
\newcommand{\Rev}{\textnormal{Rev}}
\newcommand{\Children}{\textnormal{Children}}

\begin{document}

\maketitle

\begin{abstract}
The Stackelberg Vertex Cover problem is a bilevel optimization problem with two players on a graph $G = (F \cup P, E)$ where each vertex from $F$ has a weight and the first player selects a price for each vertex in $P$. Afterwards, the second player finds a minimum vertex cover $X$ and the first player receives the set price for each vertex from $X \cap P$. The goal is to maximize the revenue of the first player.

This problem was recently shown to be NP-complete for bipartite graphs while being solvable in linear 
time on paths. We present three new algorithms for solving Stackelberg Vertex Cover on certain kinds of 
trees: (1) a pseudo-polynomial algorithm working on general trees when all weights are integer, 
i.e., it is FPT with the maximum weight as a parameter; (2) a strongly polynomial algorithm 
for trees having the property that the least common ancestor of any two vertices from $P$ is again in $P$
(this case includes paths); 
and (3) an 
FPT-algorithm for trees, where the parameter is the maximum number $P$-vertices $v_i$ that an $F$-vertex $u$ can 
reach while using no other $P$-vertices. 

These algorithms are based on a lemma that allows us to split instances at a vertex $u$ 
into multiple sub-instances, 
which follows from LP duality and integrality of the vertex cover LP on bipartite graphs. 
The lemma requires that the minimum vertex covers of the sub-instances agree on $u$ (either all include $u$ 
or all don't). For this we introduce the concept of {\em commitments}.
Finally, we show that the Stackelberg Vertex Cover problem with commitments is weakly NP-complete.
\end{abstract}

\section{Introduction}

A Stackelberg Game is a multi-player game with one distinguished player, the leader, and all other players are called followers. The game consists of a set of {\em items}, some of which are owned by the leader. Those 
are called {\em priceable}.
In a first phase the leader will select prices for the priceable items. In the second phase 
the followers will buy items to solve some goal. If an item  belonging to the leader is bought, she will receive 
the price she set for it. The leader wants to set prices such as to maximize the revenue she earns in the second 
phase.
This game is based on the economical model of von Stackelberg\cite{stackelberg}. In computer science, these games are often studied as a bilevel optimization problem where the followers need to solve a combinatorial optimization problem.

One of the first examples was the study of the Stackelberg shortest path problem where the leader owns a subset of the edges and sets prices which are interpreted as tolls. Afterwards, the follower needs to solve a shortest path problem and buys the edges on the path. Maximizing the leader's revenue  was shown to be \NP{}-complete for general graphs by Labbé, Marcotte and Savard\cite{stackShortPath}. Further work on this problem was by Roch, Savard and Marcotte in \cite{DBLP:journals/networks/RochSM05}, by Joret in \cite{DBLP:journals/networks/Joret11} or by Briest, Chalermsook, Khanna,  Laekhanukit and Nanongkai\cite{DBLP:conf/wine/BriestCKLN10}. Other combinatorial problems studied were minimum spanning trees by Cardinal et al.\cite{DBLP:journals/algorithmica/CardinalDFJLNW11}, packing and scheduling problems by Böhnlein, Schaudt und Schauer\cite{DBLP:conf/wads/BohnleinSS19} or shortest path trees by Bilo, Gualà, Proietti and Widmayer\cite{DBLP:conf/wine/BiloGPW08}.

Recently, a general toolkit was shown that proves $\Sigma_2^P$-completeness of the Stackelberg game for most \NP{}-complete problems by Grüne, Henke, Rotenberg and Wulf\cite{DBLP:journals/corr/abs-2511-05700}. Thus, there is now mainly an interest in problems where the follower needs to solve a problem from \P. All problems from $P$ admit \FPT-algorithms for their Stackelberg game when parameterized by the number of priceable items as shown by Böhnlein, Kratsch and Schaudt\cite{DBLP:conf/icalp/BohnleinKS17}.

\subsection{Problem description}

We focus in this work on the vertex cover problem. This is in \P{} when the given graph is bipartite. 
Stackelberg Vertex Cover was introduced in \cite{DBLP:conf/stacs/BriestHK08, DBLP:journals/algorithmica/BriestHK12} by Briest, Hoefer and Krysta.
We are given a graph $G = (V, E)$ with $V = P\ \dot{\cup}\ F$
and a weight function $w : F \to \R$. Vertices in $P$ 
are called {\em priceable} and vertices in $F$ are called 
{\em fixed-price}.
There are two players, the leader and the follower. In the first phase the leader 
picks a pricing function $p : P \to \mathbb{R}$, so 
$(G, p \cup w)$ is just a usual graph with vertex weights. 
In the second phase the follower chooses a minimum weight vertex cover $X$. 
The {\em revenue} for the leader for this vertex cover is 
$\rev(G, p, X) := \sum_{x \in P \cap X} p(x)$.
If multiple minimum weight vertex covers exists, the follower will 
pick a vertex cover that maximizes the revenue for the leader.\footnote{This is a standard convention in 
the Stackelberg literature. If the follower does not follow it, the leader can enforce it by reducing 
the price of each priceable item by $\epsilon$.}
The leader wants to find a pricing function that maximizes their 
revenue.

We focus on the on-follower problem on trees; the multi-follower problem on trees was already shown to be \NP{}-complete~\cite{DBLP:conf/stacs/BriestHK08, DBLP:journals/algorithmica/BriestHK12}; for 
general bipartite graphs, the one-follower problem is \NP-complete as shown by Jungnitsch, Peis and Schröder~\cite{DBLP:journals/mor/JungnitschPS22}
but admits a 2-approximation \cite{DBLP:conf/stacs/BriestHK08, DBLP:journals/algorithmica/BriestHK12}. If all priceable vertices
lie on the same side of the bipartition, the problem can be solved in polynomial time\cite{DBLP:conf/stacs/BriestHK08, DBLP:journals/algorithmica/BriestHK12}.
Recently, Eickhoff, Kauther and Peis\cite{DBLP:conf/sagt/EickhoffKP23} gave a linear-time algorithm for {\em paths}. 
Although paths are among the simplest graphs imaginable, their algorithm is highly non-trivial. 
\subsection{Results and ideas}

\begin{figure}
    \begin{subfigure}[t]{.3\textwidth}
        \centering
        \includegraphics[scale=.4]{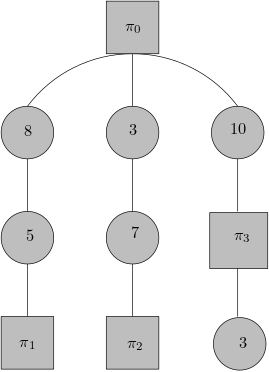}
        \caption{The tree before}
    \end{subfigure}
    \begin{subfigure}[t]{.3\textwidth}
        \centering
        \includegraphics[scale=.4]{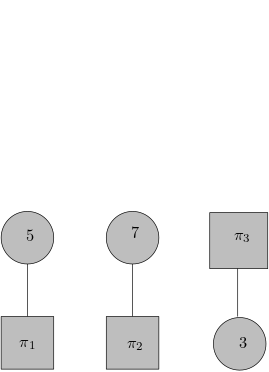}
        \caption{Splitting the tree after setting $\pi_0$ to $\infty$}
    \end{subfigure}
    \begin{subfigure}[t]{.3\textwidth}
        \centering
        \includegraphics[scale=.3]{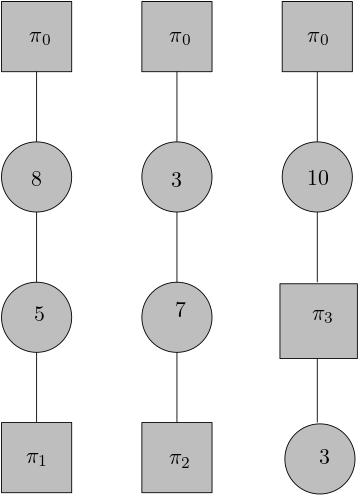}
        \caption{Splitting the instance when $\pi_0$ is supposed to be sold.}
    \end{subfigure}
    \caption{Reducing an instance by splitting.}
    \label{example-tree}
\end{figure}

Consider the tree in \cref{example-tree} where priceable vertices are the rectangular vertices. If we (the leader)
were to set the price of $\pi_0$ to $\infty$, the follower would be forced to buy all its neighbors 
(lest they pay an infinite price); this splits the instances into individual subtrees with only one 
priceable vertex, and we can solve the rest easily achieving a revenue of 15.

Thus, if we decide to certainly {\em not} sell the priceable vertex $\pi_0$, we can split the instance 
by removing $\pi_0$ and all its neighbors. 
On the other hand, if we are certain that we want to follower to buy $\pi_0$, 
there is no obvious way to split $G$ into subgraphs. Our main technical contribution is a Split-Join lemma 
that shows that we can still split the tree into subtrees, each of which contain $\pi_0$, 
and solve each individually; the catch is that this only works if the leader can incentivize the follower 
to buy $\pi_0$ in every sub-instance. We call this requirement a {\em commitment to sell} or a 
{\em Yes-commitment}. The optimal solutions of the three 
sub-instances are, in order from left to right:
\begin{enumerate}
    \item Setting $p(\pi_0) = 8$ and $p(\pi_1)=\infty$, 
    yielding a revenue of $8$.
    \item Setting $p(\pi_0)=0$ and $p(\pi_2) = 4$, earning 
    $4$.
    \item Setting $p(\pi_0)=10$ and $p(\pi_3)=3$, earning 13.
\end{enumerate}

We can now combine the pricing schemes of the three 
sub-instances  of Stackelberg-VC with commitment by adding 
up the pricing functions, obtaining $p(\pi_0) = 18$, $p(\pi_2)=4$, $p(\pi_3)=3$, and $p(\pi_1) = \infty$, achieving
a revenue of $25$.
Note that in the middle instance, we could earn 
$7$ by setting $p(\pi_0)=\infty$ and $p(\pi_2)=7$; however,
this would violate the commitment to sell $\pi_0$ in every 
instance, and indeed those three pricing schemes would not 
combine to a global one in a correct or meaningful way. 

In \cref{sec-def-tools} we formally define Stackelberg VC with commitment and state the 
Split-Join lemma. Its proof is based on LP duality between and the integrality of the vertex cover LP 
on bipartite graphs.

In the above example we split $G$ into parts at a priceable 
vertex. If all priceable vertices are leaves, then we cannot 
split it further. Is it possible to split $G$ into 
sub-instances at a fixed-price vertex in a meaningful way?
Yes, as it turns out, but this operation is much more expensive 
because we have to decide how the weight $w(u)$ should split into 
weights in the sub-instances. And we have found no better 
way than trying all possible ways. This is why our algorithm
for general trees is only pseudo-polynomial.
We first define a class of trees that can be solved 
by only splitting on priceable vertices.\\

\textbf{LCA trees and visibility.} If $G$ is a tree,
select an arbitrary priceable vertex to be its root. We call 
$G$ an {\em LCA-tree} if the least common ancestor of any 
two priceable vertices is again priceable. 
\begin{definition}[Visibility]
\label{def:visibility}
Let $u, v \in V$.
We say $u$ {\em can see} $v$ if, on the path from $u$ to $v$, 
all vertices except possibly $u$ and $v$ are fixed-price. 
The {\em visibility} 
of $G$ is the smallest $k$ such that every $u \in F$ can see 
at most $k$ priceable vertices.     
\end{definition}

Note that LCA trees are exactly those trees of visibility at most $2$. 

\begin{restatable}{theorem}{LCAResult}\label{thm:LCA-correct}
    The optimal value and the optimal pricing of the Stackelberg vertex cover game can be computed in linear time on LCA-trees.
\end{restatable}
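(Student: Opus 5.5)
We prove \cref{thm:LCA-correct} by a bottom-up dynamic program over the tree rooted at a priceable vertex. The only tools are the Split-Join lemma from \cref{sec-def-tools} for splitting at priceable vertices, and the trivial splitting at a priceable vertex of price $\infty$. Root $G$ at a priceable vertex $r$. Since $G$ is an LCA-tree, every fixed-price vertex $u$ sees at most two priceable vertices: at most one ``upper'' one, its nearest priceable ancestor, and at most one ``lower'' one in its subtree. If $u$ could see two priceable descendants, their LCA would be a fixed-price vertex on the path, which is a contradiction. Hence each maximal connected component of $G[F]$, together with the priceable vertices adjacent to it, forms a \emph{gadget}. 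A gadget consists of one priceable vertex $\pi$ (its top) and some priceable vertices $\pi_1,\dots,\pi_k$ below it. The paths from $\pi$ to the $\pi_i$ are vertex-disjoint apart from $\pi$ and fixed-price attachments: the component is a tree whose branches each contain at most one priceable descendant. This gives a tree of priceable vertices, with gadgets as edges.

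For every priceable vertex $\pi$ we compute two numbers over the subtree $T_\pi$: $Y(\pi)$, the optimal revenue in $T_\pi$ under a Yes-commitment at $\pi$, and $N(\pi)$, the optimal revenue under a No-commitment (i.e.\ $p(\pi)=\infty$). For $N(\pi)$ we remove $\pi$ and force its fixed-price neighbours into the cover, as in the introduction's example. The instance then splits into independent sub-instances hanging below. Each is either a pure fixed-price piece (revenue $0$) or a piece ending in a priceable $\pi_i$ with a path of fixed-price vertices above it. That path is determined once we know whether $\pi_i$ is sold, so $N(\pi)$ is a sum of terms $\max$ of $Y(\pi_i)$ and $N(\pi_i)$, each adjusted by the bounded local effect of the path. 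For $Y(\pi)$ we apply the Split-Join lemma at $\pi$: the instance splits into one sub-instance per child branch of $\pi$, each containing $\pi$ with a Yes-commitment, and the optimal revenues add up. Each branch is a path-like piece from $\pi$ down to at most one $\pi_i$, plus pendant fixed-price subtrees. Pendant fixed-price subtrees can be replaced by their effect on minimum vertex cover weight, computed by standard tree DP in linear time. This reduces each branch to an instance with two priceable vertices $\pi,\pi_i$ joined by a weighted path, where $\pi_i$ carries a value $Y(\pi_i)$ or $N(\pi_i)$ depending on whether it is sold.

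The main obstacle is solving this two-priceable-vertex branch subproblem in closed form. We must know how much $\pi$ can charge, under the commitment that $\pi$ is bought, as a function of whether $\pi_i$ is sold and how much it extracts. Under a Yes-commitment at $\pi_i$, the revenue from $T_{\pi_i}$ is already captured in $Y(\pi_i)$. But the price on $\pi_i$ interacts with the price on $\pi$ through the path's alternating cover structure. I would handle this by taking the branch's vertex-cover LP dual (via the same duality underlying the Split-Join lemma) and enumerating the constant number of cover patterns on the path. This shows the branch optimum is $\max\{Y(\pi_i)+a,\ N(\pi_i)+b\}$, where $a$ and $b$ are determined by alternating sums of path weights; these are computable in time linear in the path length. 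I would verify this formula against the path algorithm of Eickhoff, Kauther and Peis as a sanity check. Finally we output $\max\{Y(r),N(r)\}$. Every vertex and edge is touched a constant number of times, which gives linear time. The optimal pricing is recovered by storing argmaxes and summing the branch pricings at each $\pi$, as the Split-Join lemma permits.
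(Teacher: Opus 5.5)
Your proposal is essentially the paper's proof. It uses the same dynamic program over priceable vertices storing $\rev[u]$ and $\rev[\bar u]$ (your $Y$ and $N$), the same Split-Join at $u$ into its child branches, a commitment on the unique topmost priceable vertex $v_i$ of each branch followed by a second split at $v_i$, and the recurrence $\rev[u^?]=\opt(T_u^0,w,u^?)+\sum_i \max_{v_i^?}\bigl(\opt(T_{u,v_i},w,\{u^?,v_i^?\})+\rev[v_i^?]\bigr)$; your deletion-based handling of the No-commitment is an equivalent variant. One detail needs correcting: $a$ and $b$ are not alternating sums of path weights in general, since minimum vertex-cover weights of a weighted path involve minima. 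However, enumerating the four patterns $Q\subseteq\{\pi,\pi_i\}$ is exactly the paper's constant-size LP, which for the piece $H$ gives $a=\min\{{\rm vc}(H,w_\emptyset)-{\rm vc}(H,w_{\{\pi,\pi_i\}}),\ {\rm vc}(H,w_{\{\pi\}})+{\rm vc}(H,w_{\{\pi_i\}})-2\,{\rm vc}(H,w_{\{\pi,\pi_i\}})\}$, with all four ${\rm vc}$ values computable by a linear-time tree DP.
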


The previous technique allowed us to solve an instance by splitting it at every priceable vertex, solving the small instances that in LCA-trees have at most 2 priceable vertices with commitments. 
For general trees this might have more than two priceable vertices in the individual components, namely up to $k$, the visibility number 
of the tree. If $k$ is small, we can brute-force the individual components:

\begin{restatable}{corollary}{FPTResult}
    Stackelberg-Vertex-Cover is in \FPT, when parameterized by the visibility number.
\end{restatable}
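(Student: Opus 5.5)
The approach is to reuse the LCA-tree algorithm behind \cref{thm:LCA-correct}, changing only the base case. Root the tree at a priceable vertex and run a bottom-up dynamic program over the priceable vertices. For each priceable vertex $\pi$ and each commitment $c \in \{\text{Yes},\text{No}\}$ on $\pi$, we store the optimal revenue of the subtree hanging below $\pi$, subject to $\pi$ being (Yes) or not being (No) in the follower's cover.

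The first step is the decomposition. Cutting the tree at every priceable vertex yields \emph{components}. Each component is a maximal connected set of fixed-price vertices, together with the priceable vertices adjacent to it. Every fixed-price vertex of a component sees exactly the priceable vertices of that component, so a component contains at most $k$ priceable vertices, where $k$ is the visibility. Edges joining two priceable vertices directly form trivial components with two priceable vertices.

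At a priceable vertex $\pi$ with Yes-commitment, the Split-Join lemma of \cref{sec-def-tools} lets us solve each component containing $\pi$ separately under the commitment and add the resulting prices. With No-commitment, we set $p(\pi)=\infty$ and delete $\pi$; this forces all neighbours of $\pi$ into the cover, so they can be removed as well, and the remaining instance again splits into independent parts. Combining children therefore amounts to summing table entries, exactly as in the LCA case.

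The second step, the new ingredient, is solving one component $C$. Let $\pi_0$ be its topmost priceable vertex and $\pi_1,\dots,\pi_{r}$, $r<k$, the others. Each $\pi_i$ with $i\ge 1$ roots a subtree whose value is already tabulated as $(\mathrm{val}_i(\text{Yes}),\mathrm{val}_i(\text{No}))$. We enumerate all $2^{k}$ commitment vectors $c\in\{\text{Yes},\text{No}\}^{k}$ on the priceable vertices of $C$. For a fixed vector, the No-vertices are priced $\infty$ and removed together with their forced neighbours. The Yes-vertices must each be bought, and we maximise their total price $\sum p(\pi_i)$ subject to a minimum weight vertex cover of $C$ containing exactly the Yes-vertices. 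Since $C$ minus its priceable vertices is a forest of fixed-price vertices, this is a small LP: the leader's optimal prices are the largest prices for which including the Yes-set is still optimal. By the duality argument underlying the Split-Join lemma, the maximum total price equals the difference between the minimum cover weight that excludes the Yes-set and the minimum cover weight that is forced to include it. Both quantities are computed by a linear-time tree DP on $C$, once per subset of Yes-vertices that could be dropped.

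Adding $\sum_{i\ge1}\mathrm{val}_i(c_i)$ and maximising over $c$ with $c_0$ fixed gives the table entry for $\pi_0$ from $C$. Each component then costs $2^{O(k)}\cdot|C|$, so the total running time is $2^{O(k)}\cdot n$.

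The main obstacle is this per-component subproblem. In the LCA case a component has at most two priceable vertices and explicit formulas suffice; here we need the maximum revenue with several committed priceable vertices. Summing the Yes-commitment values through the Split-Join lemma is only valid if the component's pricing keeps the follower indifferent consistently with the children's pricings. If a clean closed form is not available, the fallback is to solve the component's bilevel problem by the general FPT algorithm of B\"ohnlein, Kratsch and Schaudt, parameterised by its at most $k$ priceable vertices, under the fixed commitments. This is also FPT in $k$ and gives the corollary directly.
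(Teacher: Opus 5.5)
Your decomposition is exactly the paper's argument: cut at every priceable vertex, note that each component has at most $k$ priceable vertices, enumerate $2^{k}$ commitment vectors per component, and fill a bottom-up table over priceable vertices. The paper does this by generalising (\ref{eqn-commit-to-vi}) to commitments on all priceable vertices seen from a child of $u$.

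\textbf{The gap: the component formula.} The step that fails is your closed form for a committed component $C$ with Yes-set $Y$. Use $w_Q$ as in \cref{sec:LCA-trees}: $Q$ is forced in at price $0$ and every other priceable vertex is priced $\infty$. The optimum is \emph{not} ${\rm vc}(C,w_\emptyset)-{\rm vc}(C,w_Y)$. That difference is only the constraint for dropping all of $Y$ at once. The constraints for dropping a proper subset of $Y$ also bind, and nothing in the Split-Join lemma removes them.

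\textbf{Counterexample.} Take the component on the path $\pi_1,a,b,\pi_2$ with $w(a)=w(b)=1$ and Yes-commitments on both priceable vertices. Your formula gives $2-1=1$, but the true value is $0$. Any $p(\pi_1)>0$ makes the cover $\{a,\pi_2\}$, of cost $1+p(\pi_2)$, strictly cheaper than every cover containing both priceable vertices, whose cost is at least $p(\pi_1)+p(\pi_2)+1$; the same holds symmetrically for $\pi_2$.

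\textbf{The error reaches the final answer.} Attach a fixed-price leaf of weight $1$ to each of $\pi_1$ and $\pi_2$. This instance has optimum $2$, obtained by selling only one of $\pi_1,\pi_2$. Your DP returns $3$. More generally, with your formula the component values add up to ${\rm vc}(G,w_\emptyset)-{\rm vc}(G,w_Y)$ for the global Yes-set $Y$. So the DP always outputs ${\rm vc}(G,w_\emptyset)-{\rm vc}(G,w_P)$, which is only the trivial upper bound on revenue.

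\textbf{The fix.} A committed component needs the LP of \cref{lem:opt-correct}: maximise $\sum_{\pi\in Y}p(\pi)$ subject to $p\geq 0$ and $\sum_{\pi\in Y\setminus Q}p(\pi)\le{\rm vc}(C,w_Q)-{\rm vc}(C,w_Y)$ for every $Q\subseteq Y$. It has at most $k$ variables and $2^{k}$ constraints. The right-hand sides are exactly the per-subset tree-DP values you say you compute, so it is solved in time $f(k)\cdot|C|$. This is how the paper handles components (\cref{lem:opt-time}).

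\textbf{Your fallback.} Your fallback to B\"ohnlein, Kratsch and Schaudt also works. After pricing the No-vertices at $\infty$ and deleting their forced neighbours, every remaining priceable vertex is Yes-committed. That is exactly the polynomial-time case cited in \cref{sec-def-tools}. So the corollary does follow, but through the LP or the fallback, not through the formula you present as the main route.

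\textbf{An unnecessary worry.} Your concern about summing component values is unfounded. The Join part of \cref{lemma-stackelberg-splitting} already guarantees that feasible committed pricings agreeing at the cut vertex add up to a feasible pricing. No further indifference condition is needed.
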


Both of these previous results are shown in \cref{sec:LCA-trees}.
Somewhat orthogonally to these results, we present in \cref{sec-pseudopolynomial} a pseudopolynomial algorithm for instances with integer weights. This is again based on splitting and joining instances but now at fixed-price vertices $u$, too. As mentioned
above, this is more complex because we have  to guess 
the weight of $u$ in the sub-instances.

\begin{restatable}{theorem}{PseudopolyResult}
    Stackelberg Vertex cover is solvable in $\mathcal{O}(|V|w_{max}^3)$ time for trees with integer weights and a maximum weight of $w_{max}$.
\end{restatable}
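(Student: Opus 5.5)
We root the tree at an arbitrary vertex and run a bottom-up dynamic program over subtrees, using the Split-Join lemma of \cref{sec-def-tools} as the combining step. For a vertex $u$ with subtree $T_u$, we consider the instance $T_u$ extended by the edge to its parent. The parent is modelled as a single pendant fixed-price leaf whose weight $c \in \{0,1,\dots,w_{max}\}$ is a parameter (weights above $w_{max}$ behave like $\infty$). We also fix a commitment $b \in \{\text{Yes},\text{No}\}$ saying whether $u$ is in the follower's cover. The table entry $R_u(c,b)$ is the optimal revenue of Stackelberg VC with commitment on this small instance.

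The motivation is that the global instance, viewed at $u$, is exactly the join of ``$T_u$ plus parent stub'' and ``rest of tree plus $u$ as a stub''. By the Split-Join lemma, once the commitment on the shared vertex is fixed, revenues add. The only thing we cannot read off is how $u$'s weight is distributed among the parts. Integrality of the bipartite vertex cover LP, and hence integral dual edge values, lets us assume every share is an integer in $[0,w_{max}]$.

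The key steps, in order:

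\textbf{(1) Split into pairs.} For a fixed-price vertex $u$ with children $v_1,\dots,v_d$, we apply the Split-Join lemma at $u$ to split the instance into $d$ star-like pieces, each containing $u$ and one child subtree. The weight $w(u)$ is divided as $w(u)=\sum_i a_i$ with integer $a_i\ge 0$.

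\textbf{(2) Combine children sequentially.} To avoid enumerating all splits, we process the children one at a time. An auxiliary table $S_j(a,b)$ stores the best revenue of $u$ together with its first $j$ children, when $u$ has weight $a$ in this partial instance and commitment $b$. The recurrence is a $(\max,+)$-convolution,
\[
S_j(a,b)=\max_{a'\le a}\bigl(S_{j-1}(a-a',b)+\mathrm{piece}_{v_j}(a',b)\bigr),
\]
which costs $O(w_{max}^2)$ per child.

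\textbf{(3) Attach the parent stub.} The parent stub of weight $c$ is one more piece attached to $u$. The piece value $\mathrm{piece}_{v}(a',b)$, for $u$ of weight $a'$ adjacent to $v$'s subtree, is read off from $R_v(\cdot,\cdot)$ by maximizing over $v$'s commitment. The follower's choice on the edge $uv$ is consistent only if at least one endpoint is in the cover, and we enforce this in the maximization. Looping over $c$ gives $O(w_{max}^3)$ per vertex, so $O(|V| w_{max}^3)$ in total.

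\textbf{(4) Priceable vertices.} For priceable $u$ no weight guessing is needed. Splitting at $u$ gives revenues that add directly, with $u$'s price being the sum of the piece prices, as in the introductory example. For commitment No we may set the price to $\infty$, which forces all neighbours into the cover and decouples the children completely.

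The main obstacle is proving the \emph{exactness} of splitting at a fixed-price vertex; this is where correctness of the whole recurrence rests. We must show that an optimal global pricing and cover induce an integral weight split $w(u)=\sum a_i$ such that each restricted pricing is optimal-with-commitment in its piece, and conversely that any feasible choice of pieces joins into a global solution with summed revenue. Here we would take the integral optimal dual (a fractional matching packing) of the follower's LP. The amount of dual value $u$ sends into each part defines $a_i$, and complementary slackness shows that each piece's follower cover remains minimal with the induced weights. Integrality of the dual on trees (total unimodularity of the incidence matrix) gives $a_i\in\mathbb{Z}$. Once this lemma is established, the dynamic program and the runtime count are routine.
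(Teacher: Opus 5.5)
Your architecture matches the paper's: root the tree, apply the Split--Join lemma at every vertex, restrict fixed-price weights to integral shares, and fold in children one at a time with a $(\max,+)$ recurrence. The parametrization differs. You index by the weight of a parent stub and absorb the edge $\{u,v\}$ into the child's table. The paper indexes by $u$'s own share $x$ and a child prefix $i$, and treats each edge as a separate two-vertex instance via the precomputed function $f$. That change is harmless.

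However, the consistency rule in your step (3) is wrong as stated. In the piece $\{u\}\cup T_v$ the vertex $u$ is a leaf. If $v$ is in the piece's cover and $u$'s share $a'$ is positive, then no minimum cover of the piece contains $u$. So $R_v(a',\text{Yes})$ with $a'>0$ certifies a piece in which $u\notin X$, and using it when $u$ is committed to Yes violates the agreement hypothesis of the Join lemma. ``At least one endpoint in the cover'' is therefore not enough; you must also force $a'=0$ in the Yes/Yes case. That is, $\mathrm{piece}_v(a',\text{Yes})=\max\{R_v(a',\text{No}),\,R_v(0,\text{Yes})\}$, where the second term is only allowed for $a'=0$. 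This is what the Yes/Yes case of the paper's $f$ encodes.

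The error changes the output. Take the path $\pi_1 - v_1 - u - v_2 - \pi_2$ with $w(v_1)=1$, $w(u)=10$, $w(v_2)=5$, rooted at $u$. Your rule admits $b=\text{Yes}$, $a_1=10$ ($v_1$ Yes, value $0$) and $a_2=0$ ($v_2$ No, $\pi_2$ priced $5$), for a value of $5$. The true optimum is $0$, because every cover missing $v_1$ or $v_2$ contains $u$ and costs at least $10>6$.

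Second, your integrality step is not justified as written. Total unimodularity gives an integral optimal dual of the follower's LP only if all of the follower's costs are integral. Those costs include the leader's prices, and an arbitrary optimal pricing need not be integral. So nothing yet guarantees an optimal solution whose shares $y(\{u,v_i\})$ lie in $\{0,\dots,w_{max}\}$.

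The same gap appears at priceable vertices. Your tables only describe fixed-price stubs, so each piece $\{u\}\cup T_v$ with $u$ priceable must be evaluated as a maximum of $c'+R_v(c',\cdot)$ over the price share $c'$, with the corrected consistency rule. Restricting $c'$ to integers needs the same argument.

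The paper supplies the missing fact as \cref{lem:int-prices}: some optimal pricing is integral. Alternatively, fix the optimal cover $X$ and maximize $\sum_{\pi\in P\cap X}\sum_{e\ni\pi}y_e$ over $y\ge 0$ subject to $\sum_{e\ni v}y_e\le w(v)$ for $v\in F$, with equality for $v\in F\cap X$, and $y_e=0$ for edges with both ends in $X$. The prices are then defined by $y$, the constraint matrix is a bipartite incidence matrix, and the right-hand side is $w$. Hence an integral optimum exists and gives integral prices and shares simultaneously. With both repairs your recurrence and running-time count go through.
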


Finally, we show that our approach cannot lead fundamentally to polynomial time algorithms for general trees.
\begin{restatable}{theorem}{NPCommitment}
    The Stackelberg-Vertex-Cover problem with commitments is \NP{}-hard even with a single commitment. 
\end{restatable}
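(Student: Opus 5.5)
We prove \NP{}-hardness of Stackelberg Vertex Cover with a single commitment by a reduction from a weakly \NP{}-complete number problem; the abstract says ``weakly'', which points to Partition or Subset Sum. Since trees without commitments admit the pseudo-polynomial algorithm above, the hardness must come from the interplay between the commitment and large numeric weights. We take Subset Sum: given integers $a_1,\dots,a_n$ and a target $T$, decide whether some subset sums to $T$. The instance will be a star-like tree with a center priceable vertex $\pi_0$ that carries a Yes-commitment. Each number $a_i$ gets its own gadget, namely a short path of fixed-price and priceable vertices attached to $\pi_0$. A natural candidate is $\pi_0 - u_i - \pi_i$, with $w(u_i)$ built from $a_i$, possibly padded with an extra leaf so that $u_i$ must be covered.

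The intended behaviour is that in each gadget the leader chooses between two regimes.

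In the \emph{route through $\pi_0$} regime, $u_i$ is not bought and $\pi_i$ is sold; then $\pi_0$ must be bought, and its price is effectively charged against the weight $w(u_i)$.

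In the \emph{extract locally} regime, $u_i$ is bought, so $\pi_i$ earns nothing or only a bounded amount.

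Because $\pi_0$ is committed to be sold, the follower's minimum vertex cover containing $\pi_0$ must stay optimal. This forces $p(\pi_0)$ to be at most the sum of the weights of the neighbors of $\pi_0$ that the follower would otherwise buy, minus the prices of the relevant $\pi_i$. This is exactly the structure the Split-Join lemma exposes: the price of $\pi_0$ splits as a sum over gadgets. So I would write the revenue as $p(\pi_0)+\sum_i \mathrm{rev}_i$ and choose the weights so that each gadget contributes a piecewise-linear concave-looking term whose slope changes at $a_i$. The global constraint coupling the gadgets should read ``$\sum_{i\in S} a_i \le T$'', with a revenue bonus exactly when equality holds. To create such a constraint I would add one extra gadget at $\pi_0$ carrying weight $T$, for example a fixed-price neighbor of $\pi_0$ with weight $T$ behind a second leaf.

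The key steps are as follows.
\begin{enumerate}
\item Fix the gadget weights.
\item Show that for any pricing respecting the commitment, the revenue is at most $B$, a bound that is a function of the input numbers only.
\item Show that equality in this bound holds if and only if the set $S$ of gadgets in the first regime satisfies $\sum_{i\in S}a_i=T$.
\item For the forward direction, exhibit the explicit prices: $p(\pi_i)=a_i$ for $i\in S$, $\pi_i$ priced at its local optimum otherwise, and $p(\pi_0)$ chosen as the residual slack.
\item For the reverse direction, use LP duality on the follower's bipartite vertex cover, as in the proof of the Split-Join lemma. The dual edge-packing certifies that the revenue decomposes per gadget, and this lets us argue that WLOG each gadget sits in one of the two regimes.
\end{enumerate}

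The main obstacle is step~5, the integrality argument. One must rule out fractional or mixed pricings in which $p(\pi_i)$ lies strictly between its two regime values and does better than the combinatorial choice. I would first try to show that the revenue, as a function of the prices, is piecewise linear with breakpoints only at the regime values, so that an optimum lies at a vertex. If that fails, I would add scaling, multiplying all $a_i$ and $T$ by a large factor and adding a constant penalty, so that any non-regime choice loses strictly more than the gain available from a near-solution.
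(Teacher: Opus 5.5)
\paragraph{Where the construction fails.} Partition/Subset Sum is the right source problem, and it is what the paper uses. The gap is structural, though, and no choice of gadget weights, scaling or penalty fixes it: you put the single Yes-commitment on a \emph{priceable} cut vertex $\pi_0$. By the paper's splitting corollary for a committed priceable cut vertex,
\begin{align*}
\opt(G,w,\{\pi_0\})=\sum_i \opt(G_i,w|_{F_i},\{\pi_0\}),
\end{align*}
where the $G_i$ are the branches at $\pi_0$. The leader chooses the shares $p_i(\pi_0)$ freely and simply adds them, so nothing budgets or links the gadgets. Your own remark that ``the price of $\pi_0$ splits as a sum over gadgets'' is exactly why they decouple.

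Concretely, take the unpadded gadget $\pi_0-u_i-\pi_i$ with $w(u_i)=a_i$. Every branch earns $a_i$ on its own by setting $p_i(\pi_0)+p(\pi_i)=a_i$, whatever the other branches do. An extra gadget of weight $T$ at $\pi_0$ is just one more independent summand. The optimum is therefore a separable sum $\sum_i g(a_i)+h(T)$ of optima of constant-size instances, computable in polynomial time. It cannot encode whether some subset sums to $T$, and no constraint of the form $\sum_{i\in S}a_i\le T$ can arise. So your steps 2--3 are false, and the integrality issue in step 5 is not the real obstacle.

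\paragraph{The missing idea.} Put the commitment on a \emph{fixed-price} cut vertex $v$. Its weight is a fixed budget that must be split among the branches (Corollary~\ref{col-split-fix}), and that budget is where the knapsack-type coupling comes from. The paper uses $w(v)=\frac12\sum_i a_i$ with a Yes-commitment on $v$. Branch $i$ consists of the edges $v-b$, $b-c$, $c-\pi_i$, $c-d$ and $d-\pi_i'$, with weights $w(b)=a_i$, $w(c)=2a_i$ and $w(d)=a_i/2$.

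Let branch $i$ receive a share $x_i$ of $w(v)$. Then:
\begin{itemize}
\item $x_i\le a_i$, since otherwise buying $b$ instead of $v$ is cheaper.
\item Selling $\pi_i$ at a positive price, worth $a_i$, is possible only when $x_i=0$.
\item Otherwise the best the branch can do is $a_i/2$, by selling $\pi_i'$.
\end{itemize}
Since $\sum_{i:x_i>0}a_i\ge\sum_i x_i=\frac12\sum_i a_i$, the revenue is at most
\begin{align*}
\sum_i a_i-\frac12\sum_{i:x_i>0}a_i\le\frac34\sum_i a_i,
\end{align*}
with equality if and only if the branches that absorb weight form exactly half of the total, i.e.\ a solution to Partition.
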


\section{Definitions and Tools}\label{sec-def-tools}

If the leader chooses a price function 
$p: P \rightarrow \R^+_0 \cup \{\infty\}$ and the follower chooses 
a minimum-weight vertex cover $X \subseteq V$ for $w \cup p$, 
consider the set $P \setminus X$. These are the priceable vertices that the leader does not sell; thus,
she could just as well set the weights to $\infty$, and 
$X$ would still be a minimum-weight vertex cover. Therefore,
there is no harm in requiring that leader strives to make 
follower buy all vertices in $P$ that have a finite price. 

\subsection{Feasible Pricing Schemes}

\begin{definition}
 A pricing scheme $p : P \rightarrow \R_0^+ \cup \{\infty\}$ 
 is called {\em feasible} for a graph $G$ and a weight function $w : V(G) \to \R \cup \{\infty\}$ 
 if there is a minimum weight vertex cover $X$
 of $G$ for the weight function $p \cup w$ with 
    \begin{align*}
    \forall u \in P: p(u) < \infty \rightarrow u \in X \ . 
    \end{align*}
    By $\feasible(G,w)$ we denote the set of pricing 
    schemes that are feasible for $G$ and $w$.
    The {\em revenue} of a feasible pricing scheme $p$ is 
    \begin{align*}
     \sum_{u \in P: p(u) < \infty} p(u) \ . 
    \end{align*}
\end{definition}

\begin{problem}[Stackelberg-VC]
    Given a graph $G = (V,E)$, a partition 
    $V = P \cup F$ and a weight function $w: F \rightarrow \R_0^+$, 
    find a feasible pricing scheme $p : P \rightarrow \R_0^+ \cup \{\infty\}$ of 
    maximum revenue.
\end{problem}

\subsection{Commitments}

In the course of our algorithms, we will encounter situations 
where the leader commits to making the follower buy 
certain vertices but not others. Formally, we define 
$\bar{V} := \{ \bar{v} \ | \ v \in V\}$. A {\em commitment}
is a set $C \subseteq V \cup \bar{V}$. It is understood that 
$C$ is non-contradictory, i.e., $v$ and $\bar{v}$ are never both in $C$. 
We call a $v \in C$ a {\em Yes-commitment to $v$} and $\bar{v} \in C$ 
a No-commitment to $v$. 
\begin{definition}
    A pricing scheme $p : P \rightarrow \R_0^+ \cup \{\infty\}$ is 
    {\em feasible for $G$, $w$ with commitment $C$} if 
    there exists a vertex cover $X \subseteq V$ of minimum cost 
    under $w \cup p$ such that 
    \begin{enumerate}
        \item $u \in X$ for all $u \in P$ with $p(u) < \infty$
        (this is just the previous notion of feasibility);
        \item $v \in X$ for all $v \in C$;
        \item $v \not \in X$ for all $\bar{v} \in C$.
    \end{enumerate}
    By $\feasible(G,w,C)$ we denote the set of all feasible 
    pricing schemes for $G$, $w$ with commitment $C$.
\end{definition}

If the leader commits to all priceable vertices $P$ being sold, 
the problem of maximizing revenue becomes tractable:
\begin{lemma}[\cite{DBLP:conf/icalp/BohnleinKS17}, Theorem 9]
  Computing a feasible pricing scheme for $G, w$ with commitment 
  $P$ can be done in polynomial time.
\end{lemma}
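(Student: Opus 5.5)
The plan is to formulate the problem, with all priceable vertices committed to being sold, as a single linear program that can be solved in polynomial time. Fix the commitment $P$, so every $u \in P$ must lie in a minimum-weight cover $X$. Since we may assume every price is finite (setting $p(u)=\infty$ can be handled by deleting $u$ and forcing its neighbours, but under commitment $P$ all of $P$ must be bought anyway), the requirement reads: there is a minimum-weight vertex cover of $G$ under $w \cup p$ containing $P$. Equivalently, the cover $X^* = P \cup Y$, where $Y$ is a minimum-weight cover of the edges not incident to $P$ (that is, a minimum-weight vertex cover of $G[F]$ under $w$), is optimal. Its cost is $p(P) + \tau_w(G[F])$, where $\tau_w(G[F])$ does not depend on $p$. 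So the feasibility condition is
\begin{align*}
p(P) + \tau_w(G[F]) \le w(S \cap F) + p(S \cap P) \quad \text{for every vertex cover } S \text{ of } G .
\end{align*}
Maximizing $p(P)$ subject to these constraints and $p \ge 0$ is an LP in $p$, but it has exponentially many constraints.

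The key step is to show that these constraints can be separated in polynomial time, and then to invoke the ellipsoid method; alternatively, one can compactify the LP via duality. The constraints can be rewritten as
\begin{align*}
\min_S \bigl(w(S\cap F) - p(P\setminus S)\bigr) \ge \tau_w(G[F]).
\end{align*}
For every $u \in P \setminus S$, all neighbours of $u$ must lie in $S$. The separation problem is therefore a minimum-weight vertex cover problem in which vertices of $P$ carry weight $p$ and vertices of $F$ carry weight $w$: we simply compute a minimum-weight vertex cover of $G$ under $w\cup p$ and compare its cost with $p(P)+\tau_w(G[F])$. On bipartite graphs (and trees, our setting) this is polynomial via max-flow or LP integrality. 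In general graphs, however, vertex cover is NP-hard.

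I expect this to be the main obstacle, and I would resolve it as Böhnlein, Kratsch and Schaudt do. They assume that the follower's problem is in \P, so a minimum-weight vertex cover oracle is available. We are in that regime: all instances the paper considers are bipartite, where the vertex cover LP is integral, and the constraint family is exactly "the cover $P\cup Y$ is no worse than any other cover". With a polynomial separation oracle, the ellipsoid method yields an optimal $p$ in polynomial time.

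To obtain a cleaner, strongly polynomial argument on bipartite graphs, I would instead replace "min over covers" by the vertex cover LP and dualize it into a fractional matching $y$ with $\sum_{e\ni v} y_e \le c(v)$. The condition becomes the existence of a matching $y$ of value $p(P)+\tau_w(G[F])$ with $\sum_{e \ni u} y_e \le p(u)$ for $u\in P$ and $\sum_{e\ni v}y_e \le w(v)$ for $v\in F$. This merges $p$ and $y$ into one polynomial-size LP: maximize $p(P)$ subject to these linear constraints and $y\ge 0$. Complementary slackness forces tightness at every $u\in P$, so we may take $p(u)=\sum_{e\ni u}y_e$. The problem then reduces to a max-flow style LP, which is solvable in polynomial time.
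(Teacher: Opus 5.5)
Your proposal is correct. The paper gives no proof of its own: it only cites B\"ohnlein, Kratsch and Schaudt. The closest material in the paper is the LP of Section~3, with one constraint per $Q\subseteq P$, which it solves directly only for constant $|P|$.

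Your first route is that same LP. Your constraints indexed by covers $S$ reduce to the paper's constraints indexed by $Q$ once you group covers by $S\cap P$ and use $p\ge 0$. You then make it polynomial with a minimum-weight vertex cover separation oracle and the ellipsoid method. This is the standard argument, and it needs nothing beyond a polynomial-time follower oracle, which matches the generality of the cited result.

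Your second route is genuinely different and, for bipartite graphs, cleaner. K\"onig--Egerv\'ary duality (the same ingredient as the paper's Split--Join Lemma) replaces the exponentially many cover constraints by a single fractional matching $y$. This gives a polynomial-size LP in $(p,y)$ with a $\{0,\pm 1\}$ constraint matrix and no ellipsoid method. The chain $y(E)\le p(P)+y(E(G[F]))\le p(P)+\tau_w(G[F])$ shows that three things are forced rather than chosen: tightness at every $u\in P$, $y=0$ on $P$--$P$ edges, and maximality of $y$ on $G[F]$.

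One imprecision: the last of these conditions means the reduced problem is a lexicographic (min-cost flow) problem, not a plain max-flow. Take the path with vertices $u,a,b,v$ in this order, $u,v\in P$ and $w(a)=w(b)=1$. Maximizing the flow from $P$ into $F$ without that condition would suggest revenue $2$, but the true optimum is $0$. Strong polynomiality still holds via min-cost flow or Tardos' algorithm.

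Your restriction to bipartite graphs (or to a polynomial-time follower) is also necessary, not just convenient. For a single priceable vertex adjacent to every vertex of a graph $H$, the optimal price is $w(V(H))-\tau_w(H)$, so the statement for arbitrary graphs is NP-hard.
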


A key insight of our work is that if $G$ has a cut vertex $u \in P$, i.e. a vertex $u$ such that removing $u$ and its edges increases the number of connected components,  then we can 
split $G$ at $u$ into parts and, for each part, solve a smaller Stackelberg-VC problem,
albeit with commitments as boundary conditions. Thus, we define:

\begin{problem}[Stackelberg-VC with Commitments]
    Given a graph $G = (P \cup F, E)$, a weight function $w: F\rightarrow \R_0^+$,
    and a set $C$ of commitments, find a pricing scheme $p \in \feasible(G,w,C)$
    of maximum revenue.
\end{problem}

\subsection{Splitting Lemmas}

We describe how an instance of Stackelberg VC with commitments 
can be broken into smaller instances. We say {\em $G$ splits into $G_1, \dots, G_d$ at $u$}
if each $G_i = (V_i, E_i)$ is an induced subgraph of $G$ such that 
$V_1 \cup \dots \cup V_k = V$ and $V_1 \cap \dots \cap V_k = \{u\}$ and $E = E_1 \cup \dots \cup E_k$.\footnote{Note that we technically allow $V_i = \{u\}$ in the split; but even if we forbid this, the split will not 
be unique when $2 \leq d < \deg(u)$. See the example: we could have split the central vertex 
into three parts differently.} 
For the rest of the section, $G$ always denotes a connected bipartite graph, 
$u$ a cut vertex of $G$, and $G_1, \dots, G_d$ are two graphs into which $G$ splits at $u$. 

If $f_1, \dots, f_d$ are functions with $f_i : A_i \rightarrow \R$, with arbitrary domains $A_i$,  
then we denote by $f_1 + \cdots + f_d$ the function from 
$A := A_1 \cup \dots \cup A_d$ to $\R$ defined by 
combining the $f_i$, and adding their values wherever 
their domains intersect:
\begin{align*}
f_1 + \cdots + f_d : A & \rightarrow \R \\    
x & \mapsto \sum_{\substack{1 \leq i \leq d \\ x \in A_i}} f_i(x) \ . 
\end{align*}
 
\begin{lemma}[Split-Join Lemma]
\label[lemma]{lemma-join-split}
    The following two statements hold:
    \begin{enumerate}
        \item \textbf{Join.} If for each $1 \leq i \leq d$ we have a weight function 
        $c_i : V(G_i) \to \R$ and a minimum weight vertex cover $X_i$ 
        for $G_i$ under $c_i$ such that either (i) $u \in X_i$ for all $i$ 
        or (ii) $u \not \in X_i$ for all $i$, then 
        $X := \bigcup X_i$ is a minimum weight vertex cover for $G$
        under $c := c_1 + \cdots + c_d$. 
        \item \textbf{Split.} If $c: V \rightarrow \R_0^+$ is a weight function 
        and $X \subseteq V$ is a minimum weight vertex cover of $G$ under $c$, 
        then we can write  $c = c_1 + \cdots + c_d$ with $c_i : V(G_i) \rightarrow \R_0^+$ such that $X_i := X \cap V(G_i)$ is a minimum 
        weight vertex cover of $G_i$ under $c_i$.
        Furthermore, 
    if $c(v) \in \N$ for all $v \in V$, then we can choose 
    each $c_i$ such that $\im(c_i) \subseteq \N$, too.
    \end{enumerate} 
\end{lemma}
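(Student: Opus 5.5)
The plan is to argue the Join part by a direct exchange argument and the Split part via LP duality for bipartite vertex cover.

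\textbf{Join.} Let $Y$ be any vertex cover of $G$ and set $Y_i := Y \cap V(G_i)$. Since $E = E_1 \cup \dots \cup E_d$ and each $G_i$ is induced, each $Y_i$ is a vertex cover of $G_i$. Likewise $X = \bigcup X_i$ covers every edge of $G$. Because the $V_i$ pairwise intersect only in $u$, and $c = \sum_i c_i$ counts $u$ with total weight $\sum_i c_i(u)$, we have $c(Y) = \sum_i c_i(Y_i)$ as long as either $u\in Y_i$ for all $i$ or for none. This is automatic, since all $Y_i$ share the single vertex $u$ and $u \in Y_i$ iff $u \in Y$. By the hypothesis that the $X_i$ agree on $u$, also $X \cap V(G_i) = X_i$ and $c(X) = \sum_i c_i(X_i)$. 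Minimality of each $X_i$ then gives $c(X) \le \sum_i c_i(Y_i) = c(Y)$. Negative weights cause no trouble, since the argument never uses nonnegativity.

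\textbf{Split.} Here I would use that $G$ is bipartite, so the vertex cover LP and its dual (fractional $c$-matching, i.e.\ $y \ge 0$ on edges with $\sum_{e \ni v} y_e \le c(v)$) both have optimum equal to the minimum weight vertex cover. Moreover, the dual has an integral optimum when $c$ is integral, because the incidence matrix is totally unimodular. Take an optimal dual $y$. Complementary slackness with the integral optimal primal $X$ gives two facts: $y_e > 0$ implies $|e \cap X| = 1$, and $v \in X$ implies $\sum_{e\ni v} y_e = c(v)$. For $v \neq u$, $v$ lies in a unique $G_i$, and I set $c_i(v) := c(v)$. For $u$, I split its weight according to its edges: $c_i(u) := \sum_{e \ni u, e\in E_i} y_e$, and I add the slack $c(u) - \sum_{e\ni u} y_e \ge 0$ to $c_1(u)$. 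Then $c = c_1+\dots+c_d$, all $c_i \ge 0$, and all $c_i$ are integral when $c$ and $y$ are.

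To finish Split, restrict $y$ to $E_i$. This restriction is dual-feasible for $(G_i,c_i)$: at vertices other than $u$ nothing changed, and at $u$ the constraint holds by construction. The pair $(X_i, y|_{E_i})$ still satisfies complementary slackness. Edges with $y_e>0$ still have exactly one endpoint in $X_i$. A vertex $v\in X_i$ other than $u$ is tight as before. If $u \in X$, then the slack at $u$ is $0$, so $u$ is tight in every $G_i$. Hence $X_i$ is an optimal (minimum) vertex cover of $G_i$ under $c_i$. The main point requiring care is exactly this tightness at $u$: if $u\in X$, every piece must be tight at $u$, which is why the slack may only appear when $u\notin X$ and is then harmless. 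One also has to check that $X_i$ is a cover of $G_i$, which is immediate since $G_i$ is induced.
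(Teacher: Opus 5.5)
Your proof is correct. The Split follows the paper's idea: take an optimal dual $y$ for $(G,c)$, give each piece the dual load of its own edges at $u$, put the slack on one piece, and restrict $y$. The paper then certifies optimality of every $X_i$ globally. Weak duality on each piece, strong duality on $G$, and $\sum_i c_i(X_i)=c(X)$ together force equality piece by piece, so tightness at $u$ never needs separate discussion. You instead check complementary slackness locally, which is why you must note that the slack is zero when $u\in X$. Your choice $c_i(u)=\sum_{e\ni u,\,e\in E(G_i)}y_e$ is also slightly more general than the paper's. The paper assumes $u$ has a unique neighbour $v_i$ in each $G_i$, which holds in its tree applications but not for arbitrary splits.

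The real difference is in the Join. The paper glues optimal duals $y_i$ of the pieces into a feasible dual for $(G,c)$ and invokes strong duality on each piece. That uses bipartiteness and tacitly $c_i\ge 0$: with a negative weight the vertex cover LP is unbounded, so no optimal $y_i$ exists. Your exchange argument restricts an arbitrary cover $Y$ of $G$ to the pieces and sums $c_i(X_i)\le c_i(Y\cap V(G_i))$. It is purely combinatorial, works for every graph, and actually covers the statement's hypothesis $c_i:V(G_i)\to\R$. The LP route buys only uniformity with the Split proof and an explicit dual certificate for $G$.
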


In words: minimum vertex covers can be joined and split between $G$ and $G_1, \dots, G_d$ as long 
as they agree on whether to include $u$.
\begin{figure}[H]  
\centering
\includegraphics[width=\textwidth]{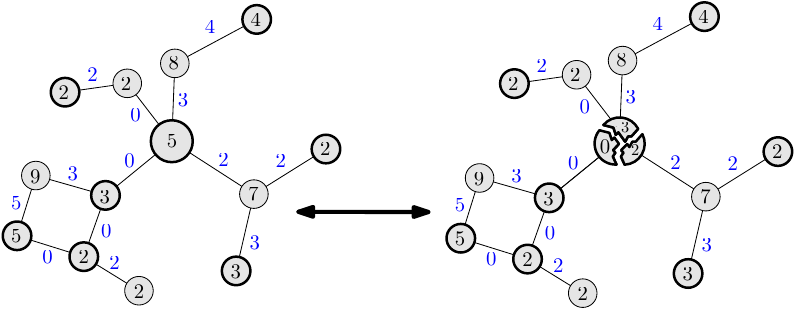}\\
\caption{Splitting and joining, using the dual 
solution to determine how the weight of $u$ should 
be split.}
\end{figure}
As a caveat, here is an example where 
joining the optimal vertex covers 
of the $G_i$ might give a suboptimal vertex cover for $G$
if the $G_i$ don't agree on $u$:
\begin{figure}[H]
\centering
\includegraphics[width=\textwidth]{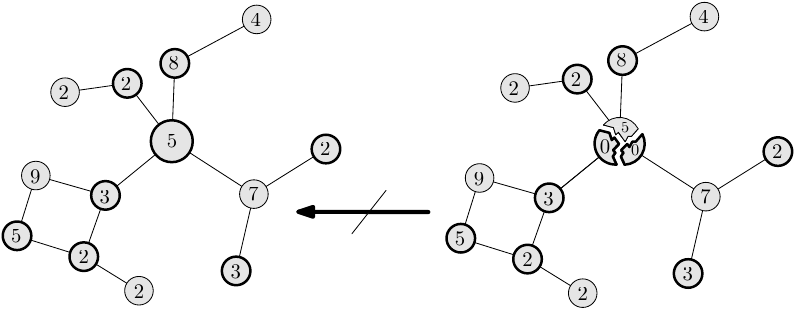}\\
\caption{The union of optimal vertex covers of the $G_i$ 
can be suboptimal for $G$ and $c_1 + \cdots + c_d$ 
if the $G_i$ don't agree on $u$.}
\end{figure}

\begin{proof}
    For Point 1, the join, let 
    let $y_i : E(G_i) \rightarrow \R_0^+$ be an optimal
    dual solution to the weighted vertex cover linear program
    for $G, c_i$. We combine them to one big $y : E(G) \rightarrow \R_0^+$.    
    This $y$ is a feasible solution to the dual of the weighted vertex cover LP 
    with weight function $c$: if $v \ne u$ then $v$ belongs to exactly one graph $G_i$ 
    and 
    \begin{align*}
        \sum_{\substack{e \in E\\ v \in e}} y(e) = \sum_{\substack{e \in E\\ v \in e}} y_i(e) \leq c_i(v) ,
    \end{align*}
    For $v=u$ we have an additional sum on both sides:
    \begin{align*}
        \sum_{\substack{e \in E\\ u \in e}} y(e) = \sum_{\substack{e \in E\\ u \in e}} \sum_{i=1}^d y_i(e) = \sum_{i=1}^d \sum_{\substack{e \in E(G_i)\\ u \in e}} y_i(e)
        \leq \sum_{i=1}^d c_i(u) = c(u) \ . 
    \end{align*}
    So $y$ is a feasible dual solution. 
    Since all $G_i$ are 
    bipartite, the minimum weight vertex covers $X_i$
    are optimal primal solutions. Since all $X_i$ agree on $u$, we have 
    \begin{align*}
        \cost(X) & = \sum_{v \in X} c(v) = \sum_{i=1}^d \sum_{v \in X_i} c_i(v) \\
        & = \sum_{i=1}^d \sum_{e \in E(G_i)} y_i(e) \tag{by strong LP duality} \\
        & = \sum_{e \in E(G)} y(e) \ . 
    \end{align*}
    So the value of the primal solution $X$ and the dual solution $y$ agree and thus they 
    are both optimal.\\

    For the second part, the {\em split}, 
    take an optimal dual solution $y : E(G) \rightarrow \R_0^+$. Define $y_i := y|_{E(G_i)}$.
    When choosing how to split $c$ into $c_1 + \cdots + c_d$, the only thing to decide 
    is how to set $c_i(u)$. The dual solution tells us how: 
    Let $v_i$ be the unique neighbor of $u$ in $G_i$. We set 
    $c_i(u) := y_i(\{u, v_i\})$ for $1 \leq i \leq d-1$ and 
    $c_d(u) = c(u) - c_1(u) - \cdots - c_{d-1}(u)$. Since 
    $y$ is a feasible dual solution we have 
    $\sum_{i=1}^d y_i(\{u, v_i\}) \leq c(u)$, from which 
    $y_d(\{u, v_d\}) \leq c_d(u)$ immediately follows. For 
    all other $v \in V(G_i)$ we simply set $c_i(v) := c(v)$. 
    Thus, each 
    $y_i$ is a feasible dual solution for the minimum vertex cover 
    LP of $G_i$ with weights $c_i$.
    We have 
    \begin{align*}
        \sum_{e \in E} y(e) & = \sum_{i=1}^d \sum_{e \in E(G_i)} y_i(e) & \leq 
        \sum_{i=1}^d \sum_{v \in X_i} c_i(v) \tag{by weak LP duality}\\
        & = 
        \sum_{v \in X} c(v)  \ , 
    \end{align*}
    where the last equality holds because all $X_i$ agree on $u$. The left-most 
    and right-most expression agree by strong duality , and the central inequality 
    holds for each $i$ individually; thus equality must hold for each $i$, 
    witnessing that $X_i$ is an optimal primal and $y_i$ is an optimal dual solution.\\
    
    Furthermore, if $c(v) \in \N$ for all $v \in V$, 
    then by the integrality of the matching LP for bipartite 
    graphs, we can choose the optimal dual solution $y$ to 
    consist of only natural numbers, too, and thus 
    $\im(c_i) \subseteq \N$.
\end{proof}

This lemma lets us split an instance of  Stackelberg-VC with commitments into multiple, smaller instances whenever $G$ has a cut vertex $u$:
\begin{lemma}[Splitting a Stackelberg instance at a committed vertex]
\label[lemma]{lemma-stackelberg-splitting}
    Let $G = (P \cup F, E)$ be connected and bipartite and 
    $u \in P \cup F$ a cut vertex, and suppose $G$ splits into $G_1, \dots, G_d$ at $u$.
    Set $P_i := P \cap V(G_i)$ and $F_i := F \cap V(G_i)$. 
    Let $C \subseteq V \cup \bar{V}$ be a set of commitments 
    containing one of $u$ and $\bar{u}$.
    Let $C_i \subseteq C$ be the commitments 
    belonging to vertices of graph $G_i$.
    \begin{enumerate}
        \item \textbf{Join.} If for each $1 \leq i \leq d$ we have a weight function $w_i : F_i \rightarrow \R_0^+$ 
        and a pricing 
        scheme $p_i : P_i \rightarrow \R_0^+$ that is feasible for $G_i, w_i$ with commitment 
        $C_i$, 
        then $p := p_1 + \cdots p_d$ is feasible for $G$ 
        and $w := w_1 + \cdots + w_d$ with commitments $C$.
        \item \textbf{Split.} If 
        $w : F \rightarrow \R_0^+$ is a weight function and
        $p: P \rightarrow \R_0^+$ is feasible for $G, w$ with commitment 
        $C$, then we can split $p = p_1 + \cdots p_d$ 
        and $w = w_1 + \cdots + w_d$ such that  
        each $p_i$ is feasible for $G_i, w_i$ with commitment $C_i$.
        Furthermore, if $\im(w), \im(p) \subseteq \N$ then 
        we can choose the $w_i, p_i$ such that $\im(w_i), \im(p_i) \subseteq \N$, too.
    \end{enumerate}
\end{lemma}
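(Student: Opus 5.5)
The plan is to reduce both parts to the Split-Join Lemma (\cref{lemma-join-split}), applied to the combined weight function $c := w \cup p$ on $G$ (and $c_i := w_i \cup p_i$ on $G_i$). The commitment on $u$ is what guarantees that the relevant minimum vertex covers agree on $u$. Infinite prices need a little care, since the lemma is phrased for real weights. Because all $p_i$ here map to $\R_0^+$ (finite), all priceable vertices must be in the cover. I will note that the argument also goes through for infinite prices: an $\infty$ at a vertex $v\neq u$ only lives in one $G_i$, and one can replace it by a sufficiently large finite value without changing the set of minimum covers.

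\textbf{Join.} For each $i$, feasibility of $p_i$ for $G_i, w_i, C_i$ gives a minimum vertex cover $X_i$ of $G_i$ under $c_i = w_i \cup p_i$ that contains all of $P_i$, contains every $v$ with $v\in C_i$, and avoids every $v$ with $\bar v\in C_i$. Since $C$ contains $u$ or $\bar u$ and $u \in V(G_i)$ for every $i$, each $C_i$ contains the same commitment on $u$. Hence either $u\in X_i$ for all $i$ or $u\notin X_i$ for all $i$. The Join part of \cref{lemma-join-split} then says that $X := \bigcup_i X_i$ is a minimum weight cover of $G$ under $c_1+\dots+c_d$. I then check that $c_1+\dots+c_d = w \cup p$. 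This holds because $u$ is either priceable in every $G_i$ or fixed-price in every $G_i$, so the sums $p_1+\dots+p_d$ and $w_1+\dots+w_d$ assemble into exactly $w\cup p$. Finally, $X$ contains all of $P=\bigcup P_i$, and it respects all commitments, since each commitment in $C$ concerns a vertex lying in some $G_i$ and is honored by $X_i$. Membership of $v\neq u$ in $X$ coincides with its membership in the unique $X_i$ containing $v$.

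\textbf{Split.} Given $p$ feasible for $G,w,C$, I take the witnessing minimum cover $X$ under $c = w\cup p$. I apply the Split part of \cref{lemma-join-split} to obtain $c=c_1+\dots+c_d$ with nonnegative $c_i$ such that $X_i := X\cap V(G_i)$ is a minimum cover of $G_i$ under $c_i$. I define $p_i := c_i|_{P_i}$ and $w_i := c_i|_{F_i}$. For vertices other than $u$ these are just restrictions of $p$ and $w$, and at $u$ they form the split of $p(u)$ or $w(u)$. Then $p_i$ is feasible for $G_i,w_i,C_i$: $X_i$ contains $P_i\subseteq P\subseteq X$, and it satisfies every commitment in $C_i\subseteq C$ because $X$ does. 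Integrality transfers directly from the integrality clause of \cref{lemma-join-split}.

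The main obstacle I expect is bookkeeping rather than mathematics. The delicate points are ensuring that the shared commitment on $u$ is what forces agreement in the Join (the lemma fails without it, as the caveat figure shows), and handling $\infty$ prices, since the LP-duality argument needs finite weights. If infinite prices must be allowed, I would first replace each $\infty$ by a value $M$ exceeding the total finite weight. I would argue that minimum covers containing no $M$-vertex are unchanged, apply the lemma, and then restore $\infty$. For the Split, the $M$-vertices other than $u$ get their full weight in their unique part, so this is harmless.
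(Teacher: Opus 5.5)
Your proposal is correct and matches the paper's proof essentially step for step. Both directions apply the Split-Join Lemma to $c = w \cup p$ (resp.\ $c_i = w_i \cup p_i$), the shared commitment on $u$ in every $C_i$ forces the covers $X_i$ to agree on $u$, and in the Split you set $p_i := c_i|_{P_i}$, $w_i := c_i|_{F_i}$ and inherit integrality from the lemma. Your explicit checks that $c_1+\cdots+c_d = w\cup p$ and that $P_i \subseteq X_i$ are details the paper leaves implicit, and the digression on infinite prices is unnecessary because the statement only concerns pricing schemes into $\R_0^+$.
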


\begin{proof}
    For the first part, the {\em join}, 
    by assumption we have vertex covers 
    $X_1, \dots, X_d$ such that $X_i$ is of minimum weight under $c_i := w_i \cup p_i$. 
    Furthermore, $X_i$ fulfills its commitments, meaning they all contain $u$ or all don't. 
    We can now apply the Join-part of 
    Lemma~\ref{lemma-join-split} and 
    see that $X := X_1 \cup \dots \cup X_d$ is optimal 
    under $c := c_1 + \cdots + c_d$. Furthermore, 
    it fulfills all commitments in 
    $C = C_1 \cup \dots \cup C_d$.
    Since $c = p \cup w$ it follows that the pricing 
    scheme $p$ is feasible for $G, w$ with commitment 
    $C$. \\

    For the split part, feasibility of $p$ 
    means that there is a vertex cover $X$ of $G$ that 
    is minimum weight under $c := p \cup w$ and fulfills 
    all commitments. The Split-part 
    of Lemma~\ref{lemma-join-split} tells us 
    that there is a way to write 
    $c = c_1 + \cdots + c_d$ such 
    that $X_i := X \cap V(G_i)$ is optimal 
    under $c_i$. The only choice is now to 
    set $p_i := c_i|P_i$ and $w_i := c_i|F_i$. 
    Each $X_i$ fulfills the commitment $C_i$ and 
    thus $p_i$ is a feasible pricing scheme for 
    $G_i, w_i$ with commitment $C_i$, as claimed.
    The ``furthermore'' part also follows directly 
    from the ``furthermore'' part of Lemma~\ref{lemma-join-split}.
\end{proof}

Lemma~\ref{lemma-stackelberg-splitting} works whether 
$u \in P$ or $u \in F$. However, this belies 
a stark algorithmic difference: if $u \in P$, then 
there is only one way to split $w$ into $w_1 + \cdots + w_d$ among 
the graphs $G_i$. Therefore, we get an algorithmically 
``easy'' splitting step:

\begin{corollary}
    If the cut vertex $u$ is priceable and $C$ is a commitment 
    containing either $u$ or $\bar{u}$, then 
    \begin{align}
        \opt(G, w, C) = \sum_{i=1}^d \opt(G_i,w_i,C_i) \ , 
        \label{eqn-split-at-priceable}
    \end{align}
    where $w_i := w|_{F_i}$ and $F_i := V(G_i)$. 
\end{corollary}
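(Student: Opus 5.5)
We prove the equality by two inequalities, each coming from one half of Lemma~\ref{lemma-stackelberg-splitting}. The fact that $u$ is priceable is what makes this work: every fixed-price vertex other than $u$ lies in exactly one $G_i$, so any decomposition $w = w_1 + \cdots + w_d$ with $w_i : F_i \to \R_0^+$ is forced to be $w_i = w|_{F_i}$. There is no freedom in how the weights are split, only in how the price of $u$ is split. Throughout, revenue is additive under $p = p_1 + \cdots + p_d$. The vertex $u$ is committed, and by the convention of the section its price is finite exactly when $u$ is sold. Hence $\sum_i \rev(p_i) = \rev(p)$, with the price $p(u) = \sum_i p_i(u)$ counted once in total.

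For ``$\geq$'': take, for each $i$, an optimal $p_i \in \feasible(G_i, w_i, C_i)$. By the Join part of Lemma~\ref{lemma-stackelberg-splitting}, $p := p_1 + \cdots + p_d$ is feasible for $G$, $w_1 + \cdots + w_d = w$, with commitment $C$. Its revenue is $\sum_i \opt(G_i, w_i, C_i)$, so $\opt(G,w,C)$ is at least this sum.

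For ``$\leq$'': take an optimal $p \in \feasible(G,w,C)$. The Split part of Lemma~\ref{lemma-stackelberg-splitting} gives $p = p_1 + \cdots + p_d$ and $w = w'_1 + \cdots + w'_d$ with each $p_i$ feasible for $G_i, w'_i, C_i$. Since $u \notin F$, we have $w'_i = w|_{F_i} = w_i$ as noted above. Hence $\opt(G,w,C) = \rev(p) = \sum_i \rev(p_i) \leq \sum_i \opt(G_i, w_i, C_i)$.

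The step requiring care is the treatment of infinite prices at $u$.
\begin{itemize}
\item Under a No-commitment $\bar u$, prices $p(u) = \infty$ must split into pieces that are all infinite. This is consistent with the No-commitment in every $C_i$, and $u$ contributes nothing to any revenue, so additivity still holds.
\item Under a Yes-commitment $u$, feasibility forces $p(u) < \infty$, and then every $p_i(u)$ is finite.
\end{itemize}
Lemma~\ref{lemma-stackelberg-splitting} is stated for finite-valued $p_i$. I would therefore handle $\infty$ either by replacing it with a sufficiently large finite value (larger than the total weight), which preserves the minimum covers, or by observing that its proof goes through verbatim with $\infty$ weights on vertices excluded from the cover. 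With this settled, the two inequalities give equation~\eqref{eqn-split-at-priceable}.
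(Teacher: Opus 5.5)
Your proof is correct and follows the paper's route. The paper also reads the corollary off the Join and Split halves of Lemma~\ref{lemma-stackelberg-splitting}, using that for priceable $u$ the decomposition $w = w_1 + \cdots + w_d$ is forced to be $w_i = w|_{F_i}$; your explicit treatment of infinite prices at $u$, which the paper glosses over, is a welcome tightening rather than a different argument.
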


If, on the other hand, $u \in F$, then to even construct 
the sub-instances $(G_i, w_i)$, we need to guess the correct 
splitting of $w$ into $w_1, \dots, w_d$: 

\begin{corollary}\label[corollary]{col-split-fix}
    If the cut vertex $u$ is fixed-price and $C$ is a commitment
    containing either $u$ or $\bar{u}$, then
    \begin{align}
        \opt(G, w, C) = 
        \sup_{\substack{w_1, \dots, w_d \\ 
        w_i : F_i \rightarrow \R_0^+ \\
        w_1 + \cdots + w_d = w}} \sum_{i=1}^d \opt(G_i,w_i,C_i) \ , 
        \label{eqn-split-at-fixed-price}
    \end{align}
    where $F_i := V(G_i)$. 
\end{corollary}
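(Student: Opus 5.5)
We prove the equality in \eqref{eqn-split-at-fixed-price} as two inequalities. Both directions come directly from \cref{lemma-stackelberg-splitting}. The one difference from the priceable case is that the fixed weight $w(u)$ is not determined by the instance and has to be distributed among the $G_i$. That is why a supremum over all decompositions $w = w_1 + \cdots + w_d$ appears.

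\emph{Lower bound ($\geq$).} Fix any decomposition $w = w_1 + \cdots + w_d$ with $w_i : F_i \to \R_0^+$. For each $i$, take any $p_i \in \feasible(G_i, w_i, C_i)$; we may assume $\feasible(G_i,w_i,C_i)$ is nonempty, otherwise the term is $-\infty$ and there is nothing to show. By the Join part of \cref{lemma-stackelberg-splitting}, which applies because $C$ contains $u$ or $\bar u$, the combined scheme $p := p_1 + \cdots + p_d$ is feasible for $G$, $w$ with commitment $C$.

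Its revenue is $\sum_i \Rev(p_i)$. Since $u \in F$, the domains $P_i$ are pairwise disjoint, so no priceable vertex is counted twice. Infinite prices are also handled correctly: a vertex $v$ with $p(v)=\infty$ lies in exactly one $P_i$ and contributes nothing there. Taking $p_i$ (near-)optimal and then the supremum over decompositions gives $\opt(G,w,C) \geq \sup \sum_i \opt(G_i,w_i,C_i)$.

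\emph{Upper bound ($\leq$).} Let $p$ be an optimal (or near-optimal) feasible scheme for $G, w, C$. The Split part of \cref{lemma-stackelberg-splitting} gives $p = p_1 + \cdots + p_d$ and $w = w_1 + \cdots + w_d$ with each $p_i \in \feasible(G_i,w_i,C_i)$. Again because the $P_i$ are disjoint, $p_i = p|_{P_i}$, so $\Rev(p) = \sum_i \Rev(p_i) \leq \sum_i \opt(G_i,w_i,C_i)$. This particular decomposition is one of those in the supremum, so the bound follows.

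\emph{Main obstacle.} The delicate point is that \cref{lemma-stackelberg-splitting} is stated for $p : P \to \R_0^+$, that is, with finite prices, while feasible schemes may use the value $\infty$. To cover this I would replace each $\infty$ by a sufficiently large finite value, for instance larger than the total finite weight. This preserves the set of minimum vertex covers and the property that such vertices are excluded, and so does not change feasibility or revenue.

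Alternatively, one can observe that the proof of \cref{lemma-join-split} works verbatim with infinite weights. An optimal dual never needs to saturate an infinite-weight vertex, and minimality of covers avoiding such vertices is unaffected.

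Finally, I would use a supremum rather than a maximum, together with $\varepsilon$-optimal schemes in both directions, to sidestep any question of whether the optima are attained.
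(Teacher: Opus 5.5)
Your proof is correct and follows the paper, which gives no separate proof of this corollary and treats it as immediate from the Join part (giving $\geq$) and the Split part (giving $\leq$) of Lemma~\ref{lemma-stackelberg-splitting}; your remarks on disjointness of the $P_i$, infinite prices, and supremum versus maximum only make explicit what the paper leaves implicit. One small caveat: literally replacing $\infty$ by a finite $M$ does change feasibility in the paper's sense (every finite-price vertex would then have to be bought, and $M$ would count towards the revenue), so that replacement has to be done at the level of weight functions in Lemma~\ref{lemma-join-split} and undone afterwards; alternatively, your second option of running Lemma~\ref{lemma-join-split} directly with infinite weights is sound.
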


There are infinitely many ways to split the weight of a fixed-price vertex in general for \cref{col-split-fix}. From \cref{lemma-join-split} we know that the split happens in a way such that the dual solutions stay feasible after the split. It is known that there exists a integral dual solution, if all weights are integers. We show now that for fixed price vertices with integer weights, there also exists an optimal solution with integer prices using a similar argument. This allows us to only consider splits into non-negative integers in this case, which results into a finite amount of splits.

\begin{lemma}\label[lemma]{lem:int-prices}
    Any Stackelberg vertex cover instance on a bipartite graph with integer weights for the fixed price vertices has an optimal solution where every priceable vertex receives an integer price (or infinity).
\end{lemma}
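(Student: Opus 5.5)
We take an arbitrary optimal feasible pricing scheme $p$ and round it to an integral one with at least the same revenue; the rounding is read off from the dual LP. Let $p$ be optimal and let $S := \{v \in P : p(v) < \infty\}$ be the set of sold vertices. Vertices of $P \setminus S$ keep price $\infty$. Setting an infinite price only forces all neighbours of that vertex into every cover, so we may equivalently delete $P\setminus S$ and treat its neighbours as forced; alternatively we keep them with weight $\infty$ in the LP. The first step is a characterization of feasibility: $p$ is feasible iff there is a vertex cover $X \supseteq S$ avoiding $P \setminus S$ which is optimal for $w \cup p$. Since $G$ is bipartite and the vertex cover LP is integral, this holds iff there is a dual $y : E \to \R_0^+$ with $\sum_{e \ni v} y(e) \le c(v)$ for all $v$, and complementary slackness holds with $X$. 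Complementary slackness means that every $v \in X$ is tight, and that $y(e) = 0$ for every edge covered twice by $X$.

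The key observation is that the revenue is $\sum_{v \in S} p(v)$, and at optimality we may assume $p(v) = \sum_{e \ni v} y(e)$ for every $v\in S$. Indeed, tightness is forced because $v \in X$. Hence revenue equals $\sum_{v\in S}\sum_{e\ni v} y(e)$. Fix the cover $X$ and the set $S$. Maximizing revenue then becomes the following linear program in variables $y \ge 0$ and $p|_S$:
\begin{itemize}
\item maximize $\sum_{v\in S}\sum_{e\ni v}y(e)$;
\item subject to $\sum_{e\ni v}y(e)\le w(v)$ for $v\in F$, with equality when $v\in F\cap X$;
\item $y(e)=0$ for every edge with both endpoints in $X$;
\item $p(v)=\sum_{e\ni v} y(e)$ for $v \in S$.
\end{itemize}
The edges at $P\setminus S$ are unconstrained, because the weight there is $\infty$. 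The $p$ variables are determined by $y$, so we can eliminate them. Any feasible $y$ then yields a feasible pricing: $X$ is a cover, and $y$ is a dual solution satisfying complementary slackness with $X$, so $X$ is optimal.

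This leaves an LP in $y$ whose constraint matrix consists of rows of the vertex--edge incidence matrix of the bipartite graph $G$, together with constraints fixing some $y(e)=0$. Such a matrix is totally unimodular, and the right-hand sides $w(v)$ are integers. The LP is bounded wherever it matters: an edge between an $S$-vertex and $P\setminus S$ cannot exist usefully, and we handle that case below. Therefore there is an optimal integral $y$, and then $p(v)=\sum_{e\ni v}y(e)\in\N$, which gives an integral optimum. Its value is at least the revenue of the original $p$, since that $p$ together with its dual was a feasible point of the LP.

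The main obstacle is the boundedness caveat and the edges at infinite-price vertices. A vertex $v \in S$ adjacent to some $z \in P\setminus S$ would make the LP unbounded. However, $z\notin X$ forces $v\in X$, and the edge $vz$ lies in $G$. Here we must check that an unbounded $y(vz)$ is genuinely impossible: the pricing $p(v)$ arbitrarily large together with $p(z)=\infty$ is infeasible. The reason is that the cover containing $z$ instead of $v$ always has infinite cost, yet raising $p(v)$ eventually makes covering the other neighbours of $z$ impossible to avoid... To avoid this subtlety we treat $P\setminus S$ via forced neighbours: for each $z\in P\setminus S$ we delete $z$, put its neighbours into $X$, and require them in $S$ or $F\cap X$. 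Those neighbours lose the dual edge to $z$. In the original instance this corresponds to dual value $y(vz)=p(v)-\sum_{\text{other}}y$, which may be any nonnegative amount. Then an unbounded revenue at such a $v$ would be a genuine unbounded revenue, and we argue it does not occur in the instance class considered, or that it is excluded because $\opt$ is finite. Since the statement asserts that an optimal solution exists, we assume $\opt<\infty$; then the LP is bounded and the total unimodularity argument applies.
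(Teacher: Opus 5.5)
Your proof is correct, but it takes a different route from the paper.

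\textbf{The paper's route.} The paper argues locally. It starts from an optimal $p^\ast$ with a fractional price at a sold vertex $v$ and follows a path of edges carrying fractional dual values. It then perturbs the price of $v$, and possibly of a second priceable endpoint $u$ depending on which sides $u$ and $v$ lie on. The duals along the path are adjusted alternately, and this removes one fractional price at a time.

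\textbf{Your route.} You fix the sold set $S$ and the cover $X$. You eliminate the prices via $p(v)=\sum_{e\ni v}y(e)$, which complementary slackness forces. Revenue maximization then becomes an LP over the dual variables alone. Its constraint matrix is a row-submatrix of the bipartite incidence matrix stacked with identity rows, so it is totally unimodular and integrality is immediate.

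\textbf{What each buys.} Your approach gains rigor and some extra content. The paper leaves several things implicit: how the path is chosen, why it ends rather than closing into a cycle of fractional edges, and why the same $X$ stays optimal during the perturbation. Your argument settles all of this at once. It also yields an integral optimal dual at the same time, which is what the integer weight splitting in the pseudo-polynomial algorithm actually relies on. Since there are only finitely many pairs $(S,X)$, it further shows that the optimum is attained. The paper's path perturbation is essentially the hands-on proof that vertices of such a bipartite incidence polytope are integral. So the two arguments express the same phenomenon at different levels of abstraction.

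\textbf{One correction to your last paragraph.} You claim that setting $p(v)$ arbitrarily large together with $p(z)=\infty$ is infeasible. This is false: that pricing is feasible and gives unbounded revenue, since the leader has a monopoly on the edge $\{v,z\}$. Delete that discussion and keep your final justification, which is the right one. Every feasible $y$ of your LP is a feasible pricing with revenue equal to the objective, so the LP value is at most $\opt<\infty$. Alternatively, $\opt<\infty$ rules out edges between two priceable vertices altogether. Then every $y(e)$ is bounded by the integer weight of an endpoint in $F$.
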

\begin{proof}
    Suppose we have an optimal pricing scheme $p^\ast$ where this is not the case. Then there must be a vertex $v$ with a non-integer price. If $v$ is not sold in the optimal solution, we can set it to infinity. Otherwise we consider the dual variables. Since $v$ is sold, the dual variables must be tight at $v$ and we can find a path $P$ of edges where the dual variables are all non-integers. If $P$ stops at a non-priceable vertex $w$, the dual variables are here not tight, since it has a integer weight, but the dual variables at $w$ do not sum to an integer. Then we can increase the price of $v$ which allows to increase/decrease the dual variables alternatingly on $P$.

    If $P$ instead ends in another priceable vertex $u$, then this must also have a non-integer weight and we consider the parity of the length of $P$ or alternatively if $u$ and $v$ are in the same partition of the bipartite graph. If they are, we can increase the price of $v$ and decrease the price of $u$ until one of them is an integer and as previously fix the dual variables. This solution has the same value and one non-integer vertex less.
    If they are not on the same side of the partition, then we can increase the price of both and thus $p^\ast$ is not optimal.
\end{proof}

\section{Solving LCA trees}\label{sec:LCA-trees}

We will show how to solve a special class of trees that we call LCA trees. For any vertices $p_1, p_2 \in P$ in an LCA-tree their least common ancestor is again in $P$.

\begin{figure}
    \begin{subfigure}[t]{.5\textwidth}
        \centering
        \begin{tikzpicture}[
            level distance=14mm,
            level 1/.style={sibling distance=34mm},
            level 2/.style={sibling distance=18mm},
            level 3/.style={sibling distance=12mm},
            level 4/.style={sibling distance=8mm},
            every node/.style={
                circle,
                draw,
                minimum size=6mm,
                inner sep=1pt
            },
            edge from parent/.style={draw, thick}
        ]
        
        \node {$t$}
        child {
            node[fill=green!25] {$r$}
            child {
                node {$u_1$}
                child {
                    node[fill=green!25] {$u_2$}
                    child { node {$u_3$} }
                }
            }
            child {
                node {$v_1$}
                child {
                    node[fill=green!25] {$v_2$}
                }
            }
            child {
                node {$w_1$}
                child {
                    node {$w_2$}
                    child { node[fill=green!25] {$w_3$} }
                }
            }
        };
        
        \end{tikzpicture}
        \caption{An LCA-tree}
    \end{subfigure}
    \begin{subfigure}[t]{.5\textwidth}
        \centering
        \begin{tikzpicture}[
            level distance=14mm,
            level 1/.style={sibling distance=34mm},
            level 2/.style={sibling distance=18mm},
            level 3/.style={sibling distance=12mm},
            level 4/.style={sibling distance=8mm},
            every node/.style={
                circle,
                draw,
                minimum size=6mm,
                inner sep=1pt
            },
            edge from parent/.style={draw, thick}
        ]
        
        \node[fill=green!25]{$t$}
        child {
            node {$r$}
            child {
                node {$u_1$}
                child {
                    node[fill=green!25] {$u_2$}
                    child { node {$u_3$} }
                }
            }
            child {
                node {$v_1$}
                child {
                    node[fill=green!25] {$v_2$}
                }
            }
            child {
                node {$w_1$}
                child {
                    node {$w_2$}
                    child { node[fill=green!25] {$w_3$} }
                }
            }
        };
        
        \end{tikzpicture}
        \caption{Not an LCA-tree, because the least common ancestor of $u_2$ and $v_2$ is $r$ and $r$ is not priceable}
    \end{subfigure}

    \caption{An example and a counter-example of an LCA-tree. Priceable vertices are shown in green.}
\end{figure}

We will solve the Stackelberg-VC problem by splitting at every priceable vertex until we only get graphs with at most two priceable vertices that we can solve easily. Afterwards, we will recombine these using dynamic programming. As we only split at priceable vertices, we only need commitments for priceable vertices. If we have a no-commitment for a priceable vertex, we can just set its price to infinity, which requires buying all neighbors and splits the graph. Thus, we can assume that in the base-case we have a connected bipartite graph and for each priceable vertex a yes-commitment.

\begin{figure}
    \centering
    \includegraphics[width=\textwidth]{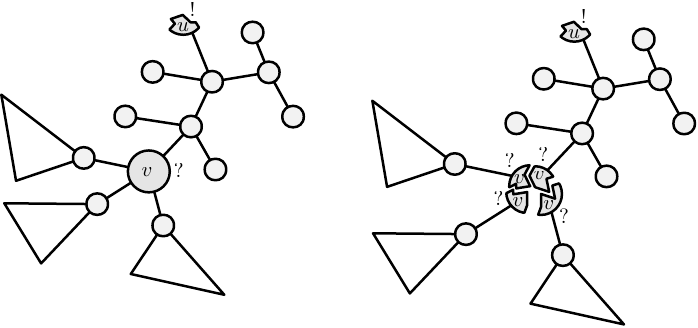}
    \caption{
        The instance on the left has a priceable leaf, carrying a commitment (!). 
        Once we have decided on a commitment (?) at $v$, we can use the Split-Join lemma to split 
        the instance into smaller once; in all but one, $v$ is a priceable leaf, and the only one carrying 
        a commitment; the remaining one, top-right, has exactly two priceable vertices, both carrying a commitment; 
        this instance can be solved in linear time.
    }
\end{figure}

Given a bipartite graph $G$ with priceable vertices $P$, we can express the optimal revenue when selling all priceable vertices $P$ as a linear program with $2^{|P|}$ many constraints by comparing it to the price of a vertex cover $X$ with $X \cap P = Q$.
For this let $w_{Q}$ be the extension of the weight function $w$ that 
\begin{align*}
    w_Q: V & \rightarrow \R_0^+ \cup \{\infty\} \\
     u \rightarrow  & 
    \begin{cases}
    0 & \textnormal{ if $u \in Q$,}\\
    \infty & \textnormal{ if $u \in P \setminus Q$,}\\
    w(u) & \textnormal{ if $u \in F$.}   
    \end{cases}
\end{align*}

 We denote by ${\rm vc}(G, c)$ the weight of the minimum weight vertex cover for a graph $G$ and weight function $c: V \rightarrow \R_0^+\cup \{\infty\}$ and by $\opt(G, w, C)$ the optimal revenue given a graph $G$ with weight function $w$ on $F$ and commitment $C$.
\begin{definition}[Linear program computing $\opt(G,w,P)$]
    For $P = \{\pi_1, \dots, \pi_k\}$ we 
    introduce the variables $p_1, \dots, p_k$.    
    We define the following linear program:
    \begin{align*}
        \textnormal{maximize } & p_1 + \cdots + p_k \\
        \textnormal{subject to } & 
        \sum_{i \in P} p_i + {\rm vc}(G, w_P) \le 
        \sum_{i \in Q} p_i + {\rm vc}(G, w_Q)\quad \forall Q \subseteq P \\
         & \vec{p} \geq \vec{0} \ . 
    \end{align*}    
\end{definition}

Basically, the right-hand side is the minimum cost 
of all vertex covers $X \subseteq V$ with $X \cap P = Q$.
Since the graph is a tree, we can compute 
${\rm vc}(G,w,Q)$ in linear time and thus construct the LP 
in time $O(2^{|P|} n)$.

\begin{lemma}\label[lemma]{lem:opt-correct}
    The linear program above computes $\opt(G,w,P)$.
\end{lemma}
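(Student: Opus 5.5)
Here $\opt(G,w,P)$ is the best revenue over pricings $p\colon P\to\R_0^+$ (all finite) that are feasible with commitment $P$. Feasibility means some minimum-weight vertex cover under $w\cup p$ contains all of $P$. The plan is to show that the feasible region of the LP is exactly the set of such pricings. Since the objective $p_1+\cdots+p_k$ is precisely the revenue when every priceable vertex is sold, the two optima then coincide.

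First I fix notation. For any $Q\subseteq P$, the cheapest vertex cover $X$ under $w\cup p$ with $X\cap P=Q$ costs exactly $\sum_{i\in Q}p_i+{\rm vc}(G,w_Q)$. The reason is that $w_Q$ makes vertices of $Q$ free, which accounts for the term $\sum_{i\in Q}p_i$ added separately. It also makes vertices of $P\setminus Q$ cost $\infty$, so any finite-cost cover under $w_Q$ avoids them. A finite cover exists, since $F\cup Q$ covers every edge that has no endpoint in $P\setminus Q$, and edges between two vertices of $P\setminus Q$ are handled by bipartiteness and the structure of the instance; otherwise both sides are $\infty$ and the constraint is vacuous. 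Conversely, any cover $X$ with $X\cap P=Q$ has cost at least this quantity. Taking $Q=P$, the left-hand side $\sum_{i\in P}p_i+{\rm vc}(G,w_P)$ is the minimum cost of a cover containing all of $P$. The minimum over all covers is the minimum over $Q$ of the right-hand sides.

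Both directions now follow.
\begin{itemize}
\item[(i)] Let $p$ be feasible with commitment $P$. Then some minimum-weight cover contains $P$, so the $Q=P$ value equals the global minimum and is at most every right-hand side. Hence $p$ satisfies all constraints, and $p\ge 0$ holds by definition.
\item[(ii)] Let $p\ge 0$ satisfy all constraints. Take a cover $X^\ast$ containing $P$ that attains ${\rm vc}(G,w_P)$ together with $\sum_{i\in P}p_i$. Its cost is at most the cheapest cover for every $Q$, hence at most the global minimum. So $X^\ast$ is a minimum-weight cover containing all priceable vertices, and $p$ is feasible with commitment $P$.
\end{itemize}
The revenue of such a $p$ is $\sum_i p_i$, so the LP optimum equals $\opt(G,w,P)$.

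The main technical point is the boundedness and attainment of the optimum. Boundedness holds because the constraint for $Q=\emptyset$ gives $\sum_i p_i\le {\rm vc}(G,w_\emptyset)-{\rm vc}(G,w_P)$. This is finite provided a cover avoiding $P$ exists, which requires $P$ to be independent; in our tree instances it is, because prices can be raised only up to the cost of $N(P)$. If instead this is infinite, the $Q=\emptyset$ constraint disappears, but some other $Q$ still bounds the sum. The feasible region is a closed polyhedron containing $\vec 0$, because setting all prices to $0$ makes including $P$ free. Hence the maximum is attained, and the supremum in the definition of $\opt$ is a maximum. I expect the only care needed is in the handling of $\infty$ values in the constraints, which I would treat by simply dropping constraints whose right-hand side is infinite.
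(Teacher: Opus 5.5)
Your proof is correct and takes the same route as the paper. You identify $\sum_{i\in Q}p_i+{\rm vc}(G,w_Q)$ as the cost of the cheapest cover with $X\cap P=Q$, and conclude that the LP's feasible region is exactly the set of pricings for which some minimum cover contains all of $P$; you do this with more care than the paper's short argument, which also covers the infinite right-hand sides and attainment of the maximum. Your side remarks have a few small slips: the left-hand side is always finite (only the right-hand side can be $\infty$), bipartiteness does not rule out edges inside $P$, and boundedness in general comes from the constraints with $Q=P\setminus\{\pi\}$, whose right-hand sides are always finite.
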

\begin{proof}
    We show that the set of feasible solutions is the set of prices that sell exactly $P$. As we maximize the sum of the prices, the result follows.

    In order to sell exactly $P$ the minimum weight vertex cover that includes all vertices from $P$ must be the cheapest. This is exactly $VC(G, w'_P)$ plus the sum of the prices. Since $w_P$ sets all priceable vertices from $P$ have a price of $0$ they are all bought. Every constraint makes sure that this is not larger than a vertex cover that only sells a part of the priceable vertices. This works as $w_{Q}$ again makes sure that only $Q$ is bought.
\end{proof}

We could compute now $\opt(G, w, P')$ by setting all priceable vertices outside of $P'$ to infinity and solve the remaining instances via this LP, but this is in general infeasible.

\begin{lemma}\label[lemma]{lem:opt-time}
    $\opt(G,w,P)$ can be computed in linear time if $T$ is a tree and the number of priceable vertices is constant.
\end{lemma}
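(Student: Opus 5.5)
The plan is to reduce the statement to \cref{lem:opt-correct} and count the work needed to set up and solve the linear program defined just above. Let $k=|P|$, which is a constant by assumption.

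\textbf{Step 1: building the LP.} The LP has $k$ variables and $2^k$ constraints, one for each $Q\subseteq P$, plus the non-negativity constraints. The only data it needs are the $2^k$ numbers ${\rm vc}(G,w_Q)$.

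For a fixed $Q$, I compute ${\rm vc}(G,w_Q)$ with the standard tree dynamic program for minimum weight vertex cover. Root the tree anywhere and, for each vertex $v$, keep two values: the best cover of the subtree of $v$ that contains $v$, and the best one that does not. Then
\[
\textnormal{in}(v)=c(v)+\sum_{x\in\Children(v)}\min(\textnormal{in}(x),\textnormal{out}(x)),\qquad \textnormal{out}(v)=\sum_{x\in\Children(v)}\textnormal{in}(x).
\]
The weights $\infty$ from $w_Q$ on $P\setminus Q$ need a little care. I treat $\infty$ as an absorbing value in these sums. Alternatively, I simply force $\textnormal{in}(v)=\infty$ at such $v$, which forces all of $v$'s neighbours into the cover. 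Either way the value stays finite, because all of $F\cup Q$ is always a cover of finite weight. Hence no constraint has $\infty$ on its right-hand side, and the constraint for $Q=P$ is trivial.

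Each evaluation takes $O(n)$ time, so all right-hand sides take $O(2^k n)=O(n)$ time in total.

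\textbf{Step 2: solving the LP.} After this, the LP has constant size, with $k$ variables and $2^k+k$ constraints whose coefficients are $0/1$ and whose right-hand sides are the computed numbers. Any exact LP method solves it in time depending only on $k$. For example, I can enumerate all vertices of the polytope: pick every $k$-subset of the constraints, solve the resulting $k\times k$ system, keep the feasible solutions, and take the best. This costs $O(1)$ arithmetic operations.

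Two degenerate outcomes need to be handled:
\begin{itemize}
\item The LP cannot be infeasible, since $\vec p=\vec 0$ is feasible. Indeed ${\rm vc}(G,w_P)\le{\rm vc}(G,w_Q)$, because making more vertices free with weight $0$ can only lower the optimum.
\item The LP cannot be unbounded. The constraint with $Q=\emptyset$ gives $\sum p_i\le {\rm vc}(G,w_\emptyset)-{\rm vc}(G,w_P)<\infty$.
\end{itemize}
By \cref{lem:opt-correct}, the optimal LP value equals $\opt(G,w,P)$, and an optimal LP solution gives the pricing.

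\textbf{Main obstacle.} The step needing the most care is making sure the $\infty$ weights do not break the linear-time tree DP or the LP, which Step 1 handles. The second point to check is that "linear time" is measured with $k$ treated as a constant, so the factor $2^k$ and the cost of solving the LP are both absorbed into the constant.
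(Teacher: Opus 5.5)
Your proof is correct and follows the paper's route: compute the $2^{k}$ right-hand sides ${\rm vc}(G,w_Q)$ with a linear-time tree DP, then solve an LP of constant size in constant time. The paper's proof also loops over all $P'\subseteq P$, which the commitment $P$ itself does not need.

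There is one small slip. $F\cup Q$ need not be a cover when two vertices of $P\setminus Q$ are adjacent. In that case some right-hand sides are $\infty$ (those constraints are simply vacuous), and ${\rm vc}(G,w_\emptyset)$ can be infinite, so your $Q=\emptyset$ bound fails. Boundedness still follows from the constraints for $Q=P\setminus\{\pi_i\}$, which give $p_i\le {\rm vc}(G,w_{P\setminus\{\pi_i\}})-{\rm vc}(G,w_P)<\infty$ for each $i$.
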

\begin{proof}
    Since the number of priceable vertices is constant, there is only a constant number of subsets $P' \subseteq P$ 
    that we have to check. For each we can compute in linear time the 
    linear program above, which has $|P'|$ variables and 
    $2^{|P'|}$ constraints.
    Building the LP requires us to compute ${\rm vc}(G, w_Q)$ 
    for each $Q \subseteq P'$, which can be done in linear time. 
    The LP itself can be solved in constant time.    
\end{proof}

For the case of a component with only two priceable vertices we actually solve this even without an LP-algorithm in constant time with only addition, subtraction and minimum operations. 
Note that a similar linear program for arbitrary Stackelberg games is given in \cite{DBLP:journals/algorithmica/BriestHK12}.

\subsection{Dynamic Programming for LCA trees.}

To describe our strongly polynomial algorithm for LCA trees, 
we need a bit of terminology for talking about 
subtrees of $T$ and subinstances of the original Stackelberg VC 
problem on $T$. Suppose we have chosen some priceable vertex 
to be the root of $T$.

Let $u$ be a priceable vertex.
\begin{itemize}
    \item $T_u$ is the subtree of $T$ rooted at $u$, i.e., the 
    subgraph of $T$ induced by $u$ and all its descendants.
    \item $T_u^0$ is the subtree induced by $u$ and those fixed-priced 
    descendants of $u$ that see $u$ but no priceable vertex besides $u$.
    \item Let $u_1, \dots, u_d$ denote those children of $u$ that are not in $T_u^0$. 
    Note that all $u_i$ are fixed-price, since otherwise the leader has a monopoly
    on the edge $\{u, u_i\}$ and $\opt = \infty$. 
    Let $T_u^i$ denote the subtree induced by $u$, $u_i$, and 
    all descendants of $u_i$.
    \item Since $T$ is an LCA tree, $T_{u_i}$ sees exactly one 
    priceable vertex besides $u$, which we call $v_i$.
    Let $T_{u,v_i}$ denote the graph induced by the set of vertices $w$ that 
    can see both $u$ and $v_i$ (which includes $u$ and $v$).    
\end{itemize}

\begin{figure}

    \includegraphics[width=0.3\textwidth]{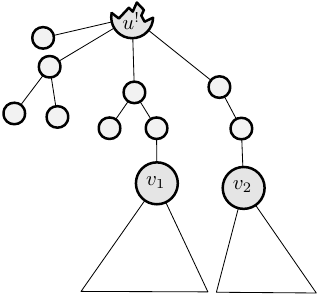}
    \includegraphics[width=0.3\textwidth]{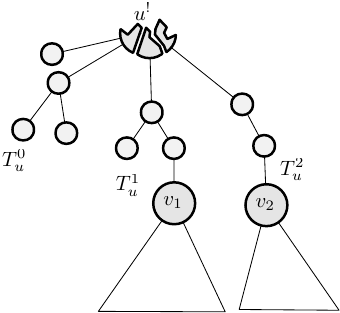}
    \includegraphics[width=0.3\textwidth]{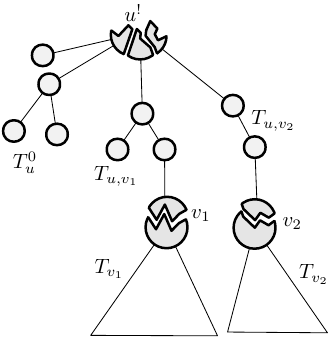}
    \caption{The tree $T_u$. The trees $T^0_u, T_u^1, \dots, T^d$. The trees 
    $T^0_u$ and $T_{u,v_i}$ and $T_{v_i}$ for $1 \leq i \leq d$.}
    \label{figure-Tu}
\end{figure}

See Figure~\ref{figure-Tu} for an illustration. 
Our dynamic programming algorithm computes a table $\rev$ where for each
$u \in P$ and commitment $u^? \in \{u, \bar{u}\}$, the value 
$\rev[u^?]$ stores the value $\opt(T_u, w, u^?)$. We will now describe 
how to compute $\rev[u^?]$, assuming that $\rev[v^?]$ has already been computed for 
all priceable descendants $v$ of $u$. 
By the Split-Join lemma, we have 
\begin{align*}
    \opt(T_u, w, u^?) = \sum_{i=0}^d \opt(T_u^i, w, u^?) \ . 
\end{align*}
Since the optimal pricing scheme for $(T^i_u, w, u^?)$ either sells $v_i$ or doesn't, we have 
\begin{align}
    \opt(T_u^i, w, u^?) = \max_{v_i^? \in \{v_i, \bar{v}_i\}} \opt(T_u^i, w, \{u^?, v_i^?\}) \ . 
    \label{eqn-commit-to-vi}
\end{align}
Once we have a commitment on $v_i^?$, we can split $T_u^i$ at $v_i$ into $T_{u,v_i}$ and $T_{v_i}$
and obtain
\begin{align*}
 \opt(T_u^i, w, \{u^?, v_i^?\}) = 
 \opt(T_{u,v_i}, w, \{u^?, v_i^?\}) + 
 \opt(T_{v_i}, w, \{u^?, v_i^?\}) \ . 
\end{align*}
The first instance, $(T_{u,v_i}, w, \{u^?, v^?_i\})$, has only two priceable 
vertices, eache carrying a commitment. Its optimal value can be computed in linear 
time by the LP described above. The second, instance, 
$(T_{v_i}, w, \{u^?, v_i^?\})$, does not contain $u$ anymore and thus can safely ignore 
the commitment $u^?$. Its optimal value can be looked up under $\rev[v_i^?]$. Therefore,
we can compute $\rev[u^?]$ by 
\begin{align*}
    \rev[u] = \opt(T_u^0, w, u^?) + \sum_{i=1}^d \max_{v_i^? \in \{v_i, \bar{v}_i\}} 
    \left( \opt(T_{u,v_i}, w, \{u^?, v_i^?\}) + \rev[v_i^?] \right)\  . 
\end{align*}

Finally, if $r$ is the (priceable) root of $T$, we can compute the global optimum by 
\begin{align*}
    \opt(G,w) = \max_{r^? \in \{r, \bar{r}\}} \opt(G, w, r^?) = 
    \max_{r^? \in \{r, \bar{r}\}} \rev[r^?] \ . 
\end{align*}

\LCAResult*

Note that this extends even to trees that are not LCA-trees but have constant {\em visibility}: 
where every fixed-price vertex can see only up to a constant number $k$ of priceable vertices. 
What changes is (\ref{eqn-commit-to-vi}): child $u_i$ of $u$ sees up to $k-1$ priceable vertices 
besides $u$; call this set $V_i$. Then (\ref{eqn-commit-to-vi}) would become 
\begin{align}
    \opt(T_u^i, w, u^?) = \max_{C \textnormal{ on } V_i} 
    \opt(T_u^i, w, \{u^?\} \cup C) \ ,  
\end{align}
where the maximization is over all $2^{|V_i|-1} \leq 2^{k-1}$ commitments $C$ to the priceable 
vertices in $V_i$.

\FPTResult*

The algorithm in Theorem~\ref{thm:LCA-correct} of course also works for paths and thus offers 
an alternative to the algorith by Eickhoff, Kauther and Peis\cite{DBLP:conf/sagt/EickhoffKP23}. 
In \cref{fig-path-example} an illustration of how our algorithm computes the optimal revenue on a path is shown.

\begin{figure}
    \centering
    \includegraphics[width=\textwidth]{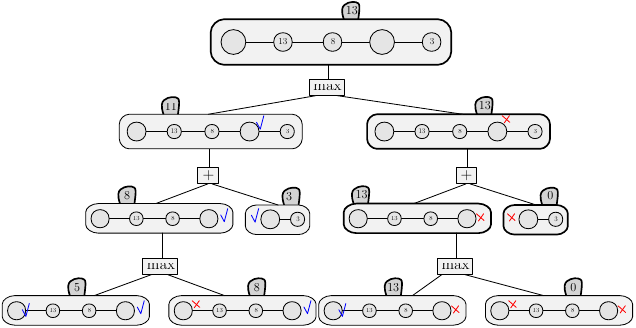}
    \caption{Our algorithm working on a path. The example is a reduced version of the one 
used in~\cite{DBLP:conf/sagt/EickhoffKP23}, and we present it in a recursive fashion rather than 
in its dynamic programming version. Blue checkmarks mean a yes-commitment; red crosses a no-commitment.
For each leaf, the value can be computed by a linear program with up to two variables and three constraints;
or by hand.}
    \label{fig-path-example}
\end{figure}

\section{Solving cycles}

Let $C$ be a cycle and contain at least one priceable vertex $p$. Either we sell all vertices in the optimal solution or there is at least one vertex that is not bought. The former can be computed in polynomial time, since the decision problem is in \P\cite{DBLP:conf/icalp/BohnleinKS17}. In the latter case, this transforms into a path, which is a LCA-tree and thus solvable in polynomial time by \cref{thm:LCA-correct}.

A more efficient way is possible for cycles of an even length with at least one priceable vertex $p$. In the first option we again set $p$ to infinity and solve the path. The second option, we can transform it into a path by adding a copy of $p'$ that use one of the edges of $p$ instead and have a commitment to buy both $p$ and $p'$ in the optimal solution. We can solve this analogously as in \cref{sec:LCA-trees}, but exclude the options to not buy $p$ or $p'$ in the dynamic programming steps. Since this is a bipartite cycle, the dual solutions allow us to split and join $p$ and $p'$ in a solution as in \cref{lemma-join-split}

\section{Stackelberg-VC on trees with integer weights}\label{sec-pseudopolynomial}

While the previous algorithms work for any weights, we will now restrict ourselves to the integer case.
We believe that this is a natural restriction as most practical problems can be stated in integers or at least be converted into integers without a huge blow-up. 
We now want to split also fixed-price vertices using \cref{col-split-fix}. By \cref{lem:int-prices}, we know that it suffices to only split the weight into integer weights, because there is an optimal solution with integer prices which implies that there is an integral dual solution. This will only result in a pseudo-polynomial algorithm.

\PseudopolyResult*

We again use a dynamic programming approach similar to \cref{sec:LCA-trees}. While we split there the instance at every priceable vertex, we now split it at every vertex. Our base cases are thus now single edges for which we want to compute the optimal revenue given commitments instead of connected components with at most two priceable vertices. As the base case only depends on the weights of the vertices of the edge and not the specific vertices, we can precompute this first in $\mathcal{O}(w_{max}^2)$ time.
We express the revenue of an edge as a function $f(w_1, w_2, c_1, c_2)$ with the weights $w_1$ and $w_2$ of the endpoints and the commitments $c_1, c_2$. We obtain the optimal revenue on the edge from $f(w_1, w_2, c_1, c_2)$ given the commitments and weights or $- \infty$ if this constellation is impossible. For notational consistency, we include a weight even when one of the vertices of the edge is priceable, but then the weight is irrelevant. In the definition of the function we consider the different cases of whether both are fixed-price or one is priceable and the other is fixed price and assume that one can obtain this information from the commitment.

As at least one vertex must be bought on this edge $f(w_1, w_2, \bar{v_1}, \bar{v_2}) = - \infty$ for any weights $w_1$ and $w_2$. Both vertices can only be sold if both vertices have a weight of zero or if one is priceable, but then the revenue is zero. In any other case where both vertices should be sold, $f$ results in $- \infty$.

If only one vertex should be sold and both are fixed-price, then the weight of this vertex can't be larger than the other weight. If one vertex is priceable and only this should be sold, we gain as a revenue the weight of the other vertex. If the fixed-price vertex should be sold, then we can set the price of the priceable vertex to $\infty$ and gain a revenue of $0$.

Now suppose that $G$ is a tree with an arbitrary root. We will again consider subtrees rooted at $v$. Let $\Children(v)$ be the set of children of $v$, i.e. those neighbors in $G$ with a higher depth starting from the root. 
For a vertex 
$u \in V(G)$, a number $x \in \N$, an index $0 \leq i \leq |\Children(u)|$, 
and a commitment $u^? \in \{u, \bar{u}\}$, we define a subproblem 
$(T_{u,i}, w_{u,i}, u^?)$ as follows:

\begin{enumerate}
\item Obtain $T_{u,i}$ by removing all incident edges to $u$ in $T_{\le u}$ except 
to its first $i$ children and taking the connected component containing $u$.
\item Obtain $w_{u,i}$ by having each $v \in V(G_{u,i})$ inherit its 
weight from the original instance $(G,w)$---except possibly $u$ 
which, if fixed-price, gets assigned weight $x$.  
\end{enumerate}

We use the table $\rev[u,x,i,u^?]$ to hold the value $\opt(G_{u,i}, w_{u,i}, \{u^?\})$.

We will show how to compute $\rev[u,x,i,u^?]$ from ``lower'' 
entries---those of vertices $v$ below $u$ or for $i' < i$. 
Let $v_1, \dots, v_i$ be the $i$ children of $u$ in $G_{u,i}$ 
and $T_1, \dots, T_i$ the subtrees rooted there. For $0 \leq y \leq x$, 
$0 \leq z \leq w(v_i)$, and a commitment $v_i^? \in \{v_i, \bar{v_i}\}$, we 
use the Split-Join-lemma to 
split $(G_{u,i}, w_{u,i}, \{u^?\})$ at $u$ and then at $v_i$ into three 
instances:
\begin{enumerate}
    \item $u$ with its first children and their subtrees with $u$ having 
    weight $y$: 
    $(G_{u,i-1}, w_{u,i-1}[u \mapsto y], \{u^?\})$. 
    The optimal value of this instance is $\rev[u,y,i-1, u^?]$ and 
    can be looked up from the table in $O(1)$ time.
    \item $u$ with its $i$th child $v_i$: formally the instance
    on graph $(\{u,v_i\}, [u \mapsto x-y, v \mapsto w(v_i) - z], \{\{u,v_i\}\})$.
    The optimal value of this instance is the function 
    $f(x-y, w(v_i)-z, u^?, v^?)$, which we precomputed. 
    \item $v_i$ with its subtree and $v_i$ having weight $z$, and 
    commitment $\{v_i^?\}$. The value of this instance is 
    $\rev[v,z,|\Children(v)|, v^?]$ and can be looked up from the table in $O(1)$
    time. 
\end{enumerate}
By the Split-Join lemma, this gives us the optimal value if we 
guess $y$ and $z$ correctly; otherwise, it gives us a value that is 
less than the optimum. Thus, 
we can now compute $\rev[u,x,i,l]$ in $O(w(u) \cdot w(v_i))$ time via 
\begin{align*}
    \rev[u,x,i,u^?] & = 
    \max_{v_i^? \in \{v, \bar{v}\}} \max_{0 \leq y \leq x} 
    \max_{0 \leq z \leq w(v_i)} 
    \left(
    \begin{array}{ll}
    & \rev[u,y,i-1, u^?]\\
    + & f(x-y, w(v_i)-z, u^?, v_i^?)\\
    + &  \rev[v_i,z,|\Children(v_i)|, v_i^?]
    \end{array}
     \right)
\end{align*}

The following figure illustrates how the instance $(T_{u,i}, x, i, u^?)$ is split into three 
smaller instances:
\begin{figure}[H]
    \centering
    \includegraphics[width=\textwidth]{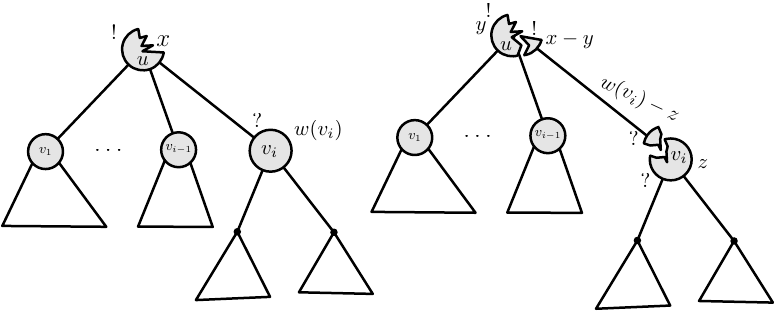}
    \caption{Computing $\Rev[u,x, i, u^?]$ where $y$ is assigned to the first $i-1$ children of $u$ and $x-y$ to the edge to $v_i$. $v_i$ assigns $w(v_i) - z$ to this edge and $z$ to its children with $0 \le z \le w(v_i)$}
\end{figure}

For the leaves $u$, we cannot sell them in their subtree, which only contains them and no edges, if they have a positive weight. In all other cases, we get a revenue of zero. Note that this holds even for priceable leaves where the possible additional revenue is obtained by $f$ on the top edge in the previous formula.

\begin{align*}
    \rev[u, x, 0, u^?] = \begin{cases}
        - \infty\, &\text{, if $x> 0$ and $u^? = u$}\\
        0\, &\text{, else}\\
    \end{cases}
\end{align*}

For the time complexity we at first do $\mathcal{O}(w_{max}^2)$ many steps to precompute the $f$ values and then for $\sum_{v\in V} deg(v) \cdot w_{max}\cdot 2$ many table entries we compute the optimal value in $\mathcal{O}(w_{max}^2)$ and thus get in total a runtime of $\mathcal{O}(|V|w_{max}^3)$ which concludes the proof\footnote{For trees $\mathcal{O}(|V|) = \mathcal{O}(|E|)$ holds}.

In our approach we compute for all distributions of the weight to the subtrees the optimal solution. One might wonder whether there is a more clever way of finding the right distribution as the optimal distribution allows us to solve the problem very easily. This however seems unlikely.

\NPCommitment*
\begin{proof}
    We reduce from the partition problem. Given a set $ A = \{a_1, a_2, \dots, a_n\}$ we construct the following branch for each element $a_i \in A$ with the common root $v$ of weight $\frac{1}{2}\sum_{i=1}^n a_i$. All other weights are denoted in the picture.
    \begin{figure}
        \centering
        \begin{tikzpicture}[every node/.style={circle,draw}]
            \node(a) at (0,0) {$v$};
            \node(b) at (2, 0) {$b : a_i$};
            \node(c) at (4, 0) {$c : 2a_i$};
            \node(d) at (6, -1) {$\pi_i$};
            \node(e) at (6, 1) {$d: \frac{a_i}{2}$};
            \node(f) at (8, 1) {$\pi_i'$};
        
            \draw (a) -- (b);
            \draw (b) -- (c);
            \draw (c) -- (d);
            \draw (c) -- (e);
            \draw (e) -- (f);
        \end{tikzpicture}
        \caption{A branch in the construction with the common root $v$}
    \end{figure}
        
    We claim that it is possible to get a vertex cover that includes $v$ and gets a revenue of $\frac{3}{4}\sum_{i=1}^n a_i$ iff the partition instance has a solution. Using the split-join lemma, this is equivalent to finding a distribution of the weight of $v$ to the individual branches, such that $v$ is sold in every branch and that maximizes the revenue.

    We can never distribute more than $a_i$ to the $a_i$-branch or else $b$ would be sold instead of $v$ in this branch. Additionally, not both $\pi_i$ and $\pi_i'$ can be sold for a positive amount, since at least one of $c$ or $d$ must be in the vertex cover to cover the edge $\{c,d\}$. Therefore, one of the priceable vertices $\pi_i$ and $\pi_i'$ must have a neighbor $x$ in the vertex cover and the neighbor of $x$ can be removed from the vertex cover if it has a positive price decreasing the cost.

    If $\pi_i$ is not sold, then we can set the price to infinity and set $\pi_i'$ to its maximum weight of $\frac{a_i}{2}$. If $\pi_i$ is sold instead, then it is better to set $\pi_i'$ to infinity so that the weight of $d$ does not decrease the possible price for $\pi_i$. Additionally, we see that $\pi_i$ can only be sold for a positive amount in a vertex cover that includes $v$, if $v$ has in this branch a weight of $0$ and if $\pi_i$ is set to more than $a_i$ it is better to sell $c$ instead of $b$ and $\pi_i$ 

    Using this, we can achieve $\frac{3}{4}\sum_{i=1}^n a_i$ when we have a solution $Y$ of the partition instance by selling $\pi_i$ if $a_i \in Y$ and $\pi_i'$ else.

    On the other hand, if we achieve a revenue of $\frac{3}{4}\sum_{i=1}^n a_i$ with a vertex cover $X$ that includes $v$, then there must a set $X'$ of $i$ such that $\pi_i$ is sold for $a_i$. The set $X'$ is a solution for the partition instance, because $Y = \{1, \dots, n\} \setminus X'$ is the set of branches which distributed accepted the weight and thus $\sum_{i \in Y} a_i = \frac{1}{2} \sum_{i=1}^n a_i$.
\end{proof}

Note that this construction actually has a better solution that does not sell $v$. We can set $\pi_i'$ always to infinity and buy $b$ and $\pi_i$ in every branch. It is  also not possible to directly encode the requirement of buying $v$ by combining it with a node of very large weight, because then the whole weight of $v$ can be distributed to the branch of the very large weight node.

This theorem tells us that in order to find an optimal solution in polynomial time, we must somehow see that buying $v$ here is the worse option without actually computing its value.

\section{Conclusion}

In this work we have extended the algorithmic toolbox for Stackelberg-Vertex-Cover, which previously allowed to solve the problem exactly if all priceable vertices were on the same side\cite{DBLP:journals/algorithmica/BriestHK12} (and improved by Baïou and Barahona\cite{DBLP:journals/algorithmica/BaiouB16}) and if the graph is a path \cite{DBLP:conf/sagt/EickhoffKP23}. Now there is an exact algorithm for LCA trees, a specialized parameterized algorithm compared to the general Stackelberg algorithm \cite{DBLP:conf/icalp/BohnleinKS17} and the pseudopolynomial algorithm for instances with integer weights offering multiple ways to solve the problems.

Together with a currently unpublished \NP{}-completeness result by Eickhoff, Kauther and Peis\cite{NPCTree}, this shows weak-\NP{}-completess for the Stackelberg Vertex-Cover problem on trees and settles the general complexity question. Remaining questions about Stackelberg Vertex Cover are whether this result can be extended towards graphs of bounded tree width and if there are other special classes of graphs apart from paths and cycles that admit polynomial time algorithms. An interesting case would be caterpillar and lobster graphs as they are quite close to a path. Another open problem is whether the \FPT{} algorithm can be made fully combinatorial without relying on the linear programs.

Outside of the Stackelberg-Vertex Cover problem it might be possible to reuse the technique of using the dual variables to split and combine smaller subproblems to other Stackelberg problems based on decision problems with a natural dual solution to obtain polynomial or pseudopolynomial algorithms.

\bibliography{main}
\end{document}